\documentclass[oneside,english]{article}
\usepackage{mathtools}
\usepackage[T1]{fontenc}
\usepackage[latin9]{inputenc}
\usepackage{array}
\usepackage{amsthm}
\usepackage{amsmath}
\usepackage{amstext}
\usepackage{amsfonts}
\usepackage{amssymb}
\usepackage{graphicx}
\usepackage{setspace}
\usepackage{esint}
\usepackage{fullpage}
\usepackage{subcaption}
\usepackage{dsfont}
\usepackage{placeins}
\usepackage{fourier}

\usepackage[colorinlistoftodos]{todonotes}

\makeatletter

\theoremstyle{plain}
\newtheorem{thm}{\protect\theoremname}
\numberwithin{thm}{section}

\theoremstyle{plain}
\newtheorem{lemma}[thm]{\protect\lemmaname}

\theoremstyle{plain}
\newtheorem{corr}[thm]{\protect\corrname}

\theoremstyle{plain}
\newtheorem{obs}{\protect\obsname}
\numberwithin{obs}{section}

\theoremstyle{plain}
\newtheorem{remark}{\protect\remarkname}
\numberwithin{remark}{subsection}

\theoremstyle{definition}
\newtheorem{defn}{\protect\definitionname}
\numberwithin{defn}{section}

\theoremstyle{plain}
\newtheorem{claim}[thm]{\protect\claimname}

\AtBeginDocument{%
  \expandafter\renewcommand\expandafter\subsubsection\expandafter
    {\expandafter\@fb@secFB\subsubsection}%
  \newcommand\@fb@secFB{\FloatBarrier
    \gdef\@fb@afterHHook{\@fb@topbarrier \gdef\@fb@afterHHook{}}}%
  \g@addto@macro\@afterheading{\@fb@afterHHook}%
  \gdef\@fb@afterHHook{}%
}

\makeatother

\usepackage{babel}

\providecommand{\definitionname}{Definition}

\providecommand{\obsname}{Observation}
\providecommand{\theoremname}{Theorem}
\providecommand{\lemmaname}{Lemma}
\providecommand{\corrname}{Corollary}
\providecommand{\remarkname}{Remark}

\providecommand{\claimname}{Claim}

\title{Characterizing minimum-length coordinated motions for two discs}

\author{David Kirkpatrick and Paul Liu}

\date{}

\begin{document}
\maketitle

\begin{abstract}
We study the problem of determining optimal coordinated motions for two disc robots in an otherwise obstacle-free plane. Using the total path length traced by the two disc centres as a measure of distance, we give an exact characterization of a shortest collision-avoiding motion for all initial and final configurations of the robots. 
The individual paths are composed of at most six (straight or circular-arc) segments, and their total length can be expressed as a simple integral with a closed form solution depending only on the initial and final configuration of the robots.
Furthermore, the paths can be parametrized in such a way that (i) only one robot is moving at any given time (decoupled motion), or (ii) the angle between the two robots' centres changes monotonically.
\end{abstract}

\section{Introduction}

In this paper we consider the problem of planning collision-free motions for two disc robots of arbitrary radius in an otherwise obstacle-free environment. Given two discs $\mathbb{A}$ and $\mathbb{B}$ in the plane, with specified initial and final configurations, we seek a shortest collision-free motion taking $\mathbb{A}$ and $\mathbb{B}$ from their initial to their final configurations. The length of such a motion is defined to be the length sum of paths traced by the centres of $\mathbb{A}$ and $\mathbb{B}$. 

The consideration of disc robots in motion planning has amassed a substantial body of research, the bulk of which is focused on the feasibility, rather than optimality, of motions. 
Schwartz and Sharir \cite{sharir0} were the first to study motion planning for $k$ discs among polygonal obstacles with $n$ total edges. For $k=2$, they developed an $\mathcal{O}(n^3)$ algorithm (later improved to $\mathcal{O}(n^2)$ \cite{sharir2, yap}) to determine if a collision-free motion connecting two specified configurations is feasible. When the number of robots $k$ is unbounded, Spirakis and Yap \cite{spirakis} showed that determining feasibility is strongly NP-hard for disc robots, although the proof relies on the robots having different radii. For the analogous problem with rectangular robots, determining feasibility is PSPACE-hard, as shown by Hopcroft et al. \cite{hopcroft0} and Hopcroft and Wilfong \cite{hopcroft1}. This result was later generalized by Hearn and Demaine \cite{hearn} for rectangular robots of size $1\times 2$ and $2 \times 1$.

On the practical side, heuristic and sampling based algorithms have been employed to solve motion planning problem for up to hundreds of robots \cite{gildardo,standley,wagner}. These algorithms typically use standard search strategies such as $A*$ coupled with domain specific heuristics (see \cite{planning-text} and the references contained therein). While efficient in practice, these algorithms are typically numerical or iterative in nature, with no precise performance bounds. A variety of alternative cost measures for our problem has also been considered, such as the minimum time motion under velocity constraints \cite{ChenIerardi, ladder, turpin} as well as the motion minimizing the total number of continuous movements \cite{abellanas, bereg, dumitrescu}. 

A variant of our problem is when the robots are homogeneous and unlabeled. In this case, any robot is allowed to move to any target location, so long as each target position is covered by exactly one robot. For $k=2$ discs, the unlabeled case is trivial as one can apply our labeled algorithm twice. However, when $k$ is unbounded, Solovey and Halperin \cite{solovey2} show that the unlabeled problem is PSPACE-hard, even in the case of unit squares with polygonal obstacles. Surprisingly, when the robots are located within a simple polygon with no obstacles, a polynomial time for checking feasibility exists \cite{adler0}. As in the labeled case, a variety of cost measures has been explored for the unlabeled case. Solovey et al. \cite{solovey1} gives an $\tilde{\mathcal{O}}(k^4+k^2n^2)$ algorithm that minimizes the length sum of paths traced by the centres of the discs with additive error $4k$. In work by Turpin et al. \cite{turpin}, an optimal solution is found in polynomial time when the cost function is the maximum path length traversed by any single robot. However, their algorithm requires that the working space is obstacle free and the initial locations of the robots are far enough apart. 

This paper makes several novel contributions to the understanding of minimum-length coordinated motions. For the case of two arbitrary discs, we first characterize all initial and final configurations that admit straight-line optimal motions. A special case of this, of course, is where the final configuration is a simple translate of the initial configuration. For all other initial and final configurations, the motion from initial to final configuration involves either a net clockwise or counter-clockwise turn in the relative position of the discs. In this case, our results describe either (i) a single optimal motion, or (ii) two feasible motions, of which one is optimal among all net clockwise motions and the other is optimal among all net counter-clockwise motions. The motions that we describe have bounded curvature except at a constant number of isolated points; in fact, they are composed of a constant number of straight segments and circular arcs, of radius $s$, the sum of the disc radii. The path length itself can be expressed as a simple integral depending only on the initial and final positions of the discs. Moreover, all paths that we describe can can be realized by two different kinds of coordinated motion: \emph{coupled} or \emph{decoupled}. In the coupled motion, the angle formed by a ray joining the two disc centres changes monotonically. Furthermore, the two discs are in contact for a connected interval of time. That is, once the two discs move out of contact, they are never in contact again. In the decoupled motion, only one of the discs moves at any given time. 

Our general approach is based on the Cauchy surface area formula, which was first applied to motion planning by Icking et al.~\cite{icking} to establish the optimality of motions of a directed line segment in the plane, where distance is measured by the length sum of the paths traced by the two endpoints of the segment. This problem has a rich history, and was first posed by Ulam \cite{ulam} and subsequently solved by Gurevich \cite{gurevich}. Other approaches to that of Icking et al. are quite different, and use control theory to obtain differential equations that characterize the optimal motion \cite{gurevich, verriest}. Of course, the problem of moving a directed line segment of length $s$ corresponds exactly to the coordinated motion of two discs with radius sum $s$ constrained to remain in contact throughout the motion. Hence the coordinated motion of two discs with radius sum $s$ can also be seen as the problem of moving an ``extensible" line segment that can extend freely but has minimum length $s$. As such, our results also generalize those of Icking et al.  Although we use some of the same tools introduced by Icking et al., our generalization is non-trivial; the doubling argument that lies at the heart of the proof of Icking et al.~depends in an essential way on the assumption that the rod length is fixed throughout the motion.

The rest of the paper is organized as follows. In Section \ref{sec:backgrd} we outline some basic definitions as well as our tools for the problem. In Section \ref{sec:overview} we summarize the general structure of our proofs, with the main proof and algorithm given in Sections \ref{sec:mainprf} and \ref{sec:mainprf2}. 

\section{Background}
\label{sec:backgrd}
 
To describe the motion of a pair of disc robots between their initial and final configurations, we first make precise several terms that have intuitive meaning. We assume for concreteness that the radii of the two discs sum to $s$.


\begin{defn}
The (instantaneous) \textbf{position} of a disc is simply a point in $\Re^2$ specifying the location of its centre.
A \textbf{placement} of a disc pair $({\mathbb A},{\mathbb B})$ is a pair $(A, B)$, where $A$ (resp. $B$) denotes the position of ${\mathbb A}$ (resp. ${\mathbb B}$). A placement $(A,B)$ is said to be \textbf{compatible} if $||A-B||\geq s$. 
\end{defn}

A pair of discs can move from placement to placement through a motion, which we can now define:

\begin{defn}
A \textbf{trajectory} $\xi_{\mathbb A}$ of a disc ${\mathbb A}$ from a position $A_0$ to a position $A_1$ is a continuous, rectifiable curve of the form $\xi_{\mathbb A} : [0,1] \rightarrow \Re^2$, where $\xi_{\mathbb A}(0) = A_0$, $\xi_{\mathbb A}(1) = A_1$.

A \textbf{(coordinated) motion} $m$ of a disc pair $({\mathbb A},{\mathbb B})$ from a placement $(A_0,B_0)$ to a placement $(A_1, B_1)$ is a pair $(\xi_{\mathbb A}, \xi_{\mathbb B})$, where $\xi_{\mathbb A}$ (resp. $\xi_{\mathbb B}$) is a trajectory of $\mathbb A$ (resp. $\mathbb B$) from position $A_0$ to $A_1$ (resp. position $B_0$ to $B_1$). A motion is said to be compatible or feasible if all of its associated placements are compatible.
\end{defn}

Since we are interested in characterizing collision-free motions, we will assume that, unless otherwise specified, all placements and motions that arise in this paper are compatible.

\begin{defn}
The \textbf{length} $\ell(\xi_{\mathbb A})$ of a trajectory $\xi_{\mathbb A}$ is simply the Euclidean arc-length of its trace, that is,
\[
	\ell(\xi_{\mathbb A}) = \sup_{T} \sum_{i=1}^k ||\xi_{\mathbb A}(t_{i-1})- \xi_{\mathbb A}(t_i)||
\]
where the supremum is taken over all subdivisions $T=\{t_0,t_1,\ldots,t_k\}$ of $[0,1]$ where $0 = t_0 < t_1 < \cdots < t_k = 1$. 

The \textbf{length} $\ell(m)$ of a motion $m$ is the sum of the lengths of its associated trajectories, i.e. $\ell(m) = \ell(\xi_{\mathbb A}) + \ell(\xi_{\mathbb B})$. Finally, the \textbf{(collision-free) distance} $d(P_0,P_1)$ between two placements $P_0=(A_0,B_0)$ and $P_1=(A_1, B_1)$ is the minimum possible length over all compatible motions $m$ from $P_0$ to $P_1$. We refer to any compatible motion $m$ between $P_0$ and $P_1$ satisfying $\ell(m) = d(P_0,P_1)$ as a \textbf{shortest} or \textbf{optimal} motion between $P_0$ and $P_1$. As a shorthand, we also use $\ell(C)$ to represent the perimeter of a closed curve $C$.
\end{defn}

The fact that $d$ is a metric on the set of placements is easy to check. Nevertheless, one may be concerned about the existence of  a shortest motion under this notion of distance.  The fact that a shortest motion exists is a consequence of the Hopf-Rinow theorem, for which details can be found in \cite{gromov}.

\section{The general approach}
\label{sec:overview}

Suppose that the disc pair $(\mathbb{A}, \mathbb{B})$ has initial placement $P_0 = (A_0, B_0)$ and final placement $P_1 = (A_1, B_1)$, and let $m = (\xi_{\mathbb A}, \xi_{\mathbb B})$ be any motion from $P_0$ to $P_1$. Denote by $\widearc{\xi_{\mathbb A}}$ (resp. $\widearc{\xi_{\mathbb B}}$) the closed curve defining the boundary of the convex hull of $\xi_{\mathbb A}$ (resp. $\xi_{\mathbb B}$). Since
$\xi_{\mathbb A}$ (resp. $\xi_{\mathbb B}$), together with the segment $\overline{A_0A_1}$ (resp. $\overline{B_0B_1}$), forms a closed curve whose convex hull has boundary  $\widearc{\xi_{\mathbb A}}$ (resp. $\widearc{\xi_{\mathbb B}}$), it follows from convexity that:
\begin{equation}
\label{eq:inequal}
\ell(\xi_{\mathbb A})\geq \ell(\widearc{\xi_{\mathbb A}}) - |\overline{A_0A_1}|
\makebox[1cm]{and} 
\ell(\xi_{\mathbb B})\geq \ell(\widearc{\xi_{\mathbb B}}) - |\overline{B_0B_1}|.
\end{equation}
When the inequality for $\xi_{\mathbb A}$ (resp. $\xi_{\mathbb B}$) is an equality, we say that the trace of $\xi_{\mathbb A}$ (resp. $\xi_{\mathbb B}$) is convex. When both $\xi_{\mathbb A}$ and $\xi_{\mathbb B}$ are convex, we say that motion $m = (\xi_{\mathbb A}, \xi_{\mathbb B})$ is convex.

Given a placement $P=(A,B)$, we refer to the angle formed by the vector from $B$ to $A$ with respect to the $x$-axis as the \emph{angle} of the placement $P$. Let $[\theta_0, \theta_1]$ be the range of angles counter-clockwise between the angle of $P_0$ and $P_1$.

\begin{obs}
\label{monotone} Let $m$ be any motion from $P_0$ to $P_1$, and let $I$ be the range of angles realized by the set of placements in $m$.
Then $[\theta_0,\theta_1]\subseteq I$ or $S^1-[\theta_0,\theta_1]\subseteq I$, where $S^1 = [0,2\pi]$.
\end{obs}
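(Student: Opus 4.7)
The plan is to view the angle of the placement as a continuous real-valued function of the motion parameter and then apply the intermediate value theorem. Concretely, given any motion $m = (\xi_{\mathbb A}, \xi_{\mathbb B})$ from $P_0$ to $P_1$, the vector $\xi_{\mathbb A}(t) - \xi_{\mathbb B}(t)$ is continuous in $t$ and, by compatibility, has norm at least $s > 0$. Hence its argument defines a continuous function $\theta : [0,1] \to S^1$ with $\theta(0) = \theta_0$ and $\theta(1) = \theta_1$.

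The next step is to lift $\theta$ to a continuous function $\widetilde{\theta} : [0,1] \to \mathbb{R}$ via the standard covering map $\mathbb{R} \to S^1$; this is possible because $[0,1]$ is simply connected. Writing $\alpha = \theta_1 - \theta_0 \in [0, 2\pi)$ for the counter-clockwise angular displacement from $P_0$ to $P_1$, we have
\[
\widetilde{\theta}(1) - \widetilde{\theta}(0) \equiv \alpha \pmod{2\pi},
\]
so $\widetilde{\theta}(1) - \widetilde{\theta}(0) = \alpha + 2\pi k$ for some integer $k$.

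Now I would split into two cases on the sign of $k$. If $k \geq 0$, then $\widetilde{\theta}(1) \geq \widetilde{\theta}(0) + \alpha$, and the intermediate value theorem applied to $\widetilde{\theta}$ implies that every value in $[\theta_0, \theta_0 + \alpha]$ is attained. Projecting down to $S^1$ shows that the image $I$ of $\theta$ contains the counter-clockwise arc $[\theta_0, \theta_1]$. If instead $k \leq -1$, then $\widetilde{\theta}(1) \leq \widetilde{\theta}(0) + \alpha - 2\pi$, so by the intermediate value theorem every value in $[\theta_0 + \alpha - 2\pi, \theta_0]$ is attained; projecting this interval of length $2\pi - \alpha$ to $S^1$ gives precisely the complementary arc $S^1 - [\theta_0, \theta_1]$ (together with its endpoints), and hence $S^1 - [\theta_0, \theta_1] \subseteq I$.

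There is no substantial obstacle here; the only mildly delicate point is justifying the continuity of $\theta$ as a map into $S^1$ and the existence of the lift, both of which follow immediately from compatibility of $m$ (so that the angle is defined and continuous) and the simple-connectedness of $[0,1]$. Everything else is a one-line application of the intermediate value theorem to the lifted function.
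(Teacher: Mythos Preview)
Your argument is correct. The paper does not actually supply a proof of this statement: it is recorded as an ``Observation'' and used without further justification, so there is nothing to compare against. Your lift-and-IVT argument is exactly the standard way to make such an observation rigorous, and the case split on the sign of the winding number $k$ cleanly yields the two alternatives in the statement.
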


We use Observation \ref{monotone} to categorize the motions we describe into \emph{net clockwise} and \emph{net counter-clockwise} motions. Net clockwise motions satisfy
$S^1-[\theta_0,\theta_1]\subseteq I$ and net counter-clockwise
motions satisfy $[\theta_0,\theta_1]\subseteq I$. 

Since any motion is either net clockwise or net counter-clockwise (or both) it suffices to optimize over net clockwise and net counter-clockwise motions separately.  The following lemma sets out sufficient conditions for net (counter-)clockwise motions to be optimal.

\begin{lemma}
\label{optkey}
Let $m = (\xi_{\mathbb A}, \xi_{\mathbb B})$ be any  net (counter-)clockwise motion from $P_0$ to $P_1$ satisfying the following properties:

\begin{enumerate}
	\item (Convexity) $\,\widearc{\xi_{\mathbb A}} = \xi_{\mathbb A} \cup \overline{A_0A_1}$ \makebox[1.3cm]{and}
$\widearc{\xi_{\mathbb B}} = \xi_{\mathbb B} \cup \overline{B_0B_1}$; \makebox[1.3cm]{and}
    \item (Minimality) $\,\ell(\widearc{\xi_{\mathbb A}}) + \ell(\widearc{\xi_{\mathbb B}})$ is minimized over all possible net (counter-)clockwise motions.
\end{enumerate}
Then $m$ is a shortest net (counter-)clockwise motion from $P_0$ to $P_1$.
\end{lemma}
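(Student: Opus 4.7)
The plan is to show that the two hypotheses (Convexity) and (Minimality), combined with the inequality (\ref{eq:inequal}) from the paragraph just above, give $\ell(m) \leq \ell(m')$ for every other net (counter-)clockwise motion $m'$ from $P_0$ to $P_1$. The argument is essentially a three-line chain of inequalities, so there is no serious obstacle; the work is in unpacking the definitions and making sure the class over which the minimum is taken matches the class over which we want to prove optimality.

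First I would observe that, by (Convexity), the closed curve $\widearc{\xi_{\mathbb A}}$ is the disjoint union (except at endpoints) of the trace of $\xi_{\mathbb A}$ and the chord $\overline{A_0 A_1}$, so
\[
\ell(\xi_{\mathbb A}) = \ell(\widearc{\xi_{\mathbb A}}) - |\overline{A_0 A_1}|,
\]
and similarly for $\mathbb{B}$. Hence the hypothesized motion $m$ achieves equality in (\ref{eq:inequal}), yielding the clean expression
\[
\ell(m) = \ell(\widearc{\xi_{\mathbb A}}) + \ell(\widearc{\xi_{\mathbb B}}) - |\overline{A_0 A_1}| - |\overline{B_0 B_1}|.
\]

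Next I would pick an arbitrary net (counter-)clockwise motion $m' = (\xi'_{\mathbb A}, \xi'_{\mathbb B})$ from $P_0$ to $P_1$. Applying inequality (\ref{eq:inequal}) to $m'$ (with no assumption of convexity on its traces) gives
\[
\ell(m') \geq \ell(\widearc{\xi'_{\mathbb A}}) + \ell(\widearc{\xi'_{\mathbb B}}) - |\overline{A_0 A_1}| - |\overline{B_0 B_1}|,
\]
and then (Minimality) applied to $m'$, which lies in the class over which the minimum is taken, gives
\[
\ell(\widearc{\xi'_{\mathbb A}}) + \ell(\widearc{\xi'_{\mathbb B}}) \geq \ell(\widearc{\xi_{\mathbb A}}) + \ell(\widearc{\xi_{\mathbb B}}).
\]
Chaining these with the equality from the previous paragraph produces $\ell(m') \geq \ell(m)$, which is precisely what is needed.

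The only point to be careful about is the implicit hypothesis that all motions under consideration are compatible (as noted in the background section); since both the statement of the lemma and the hypothesis (Minimality) restrict attention to compatible net (counter-)clockwise motions, the argument goes through without modification. In particular, we do not need to produce $m'$ explicitly or to verify anything about the geometric structure of $\widearc{\xi'_{\mathbb A}}$ and $\widearc{\xi'_{\mathbb B}}$; inequality (\ref{eq:inequal}) is purely a convex-hull estimate and needs no further assumptions on $m'$. The real work of the paper will therefore be in actually exhibiting, for each pair $(P_0, P_1)$, a motion satisfying both (Convexity) and (Minimality) in each of the two rotational classes; this lemma just certifies that any such motion is optimal in its class.
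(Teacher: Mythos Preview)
Your proposal is correct and follows essentially the same three-line chain of inequalities as the paper's own proof: use (Convexity) to get equality in (\ref{eq:inequal}) for $m$, apply (\ref{eq:inequal}) as an inequality to an arbitrary competitor $m'$, and invoke (Minimality) to compare the convex-hull perimeters. There is nothing to add.
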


\begin{proof}

Let $m' = ({\xi}'_{\mathbb A}, {\xi}'_{\mathbb B})$ be any net (counter-)clockwise motion from $P_0$ to $P_1$.
It follows from property 1 that 
$\ell(m) = \ell(\widearc{\xi_{\mathbb A}}) - |\overline{A_0A_1}| + \ell(\widearc{\xi_{\mathbb B}}) - |\overline{B_0B_1}|$.  
Furthermore, from 2 we know that 
$\ell(\widearc{\xi_{\mathbb A}}) + \ell(\widearc{\xi_{\mathbb B}}) \le 
\ell(\widearc{{\xi}'_{\mathbb A}}) + \ell(\widearc{{\xi}'_{\mathbb B}})$.
Thus, using inequality (\ref{eq:inequal}), we have $\ell(m) \leq  \ell(\widearc{{\xi}'_{\mathbb A}}) - |\overline{A_0A_1}| + \ell(\widearc{{\xi}'_{\mathbb B}}) - |\overline{B_0B_1}| \leq \ell(m').$
\end{proof}

\begin{figure}[ht]
\centering
\includegraphics[width=0.9\textwidth]{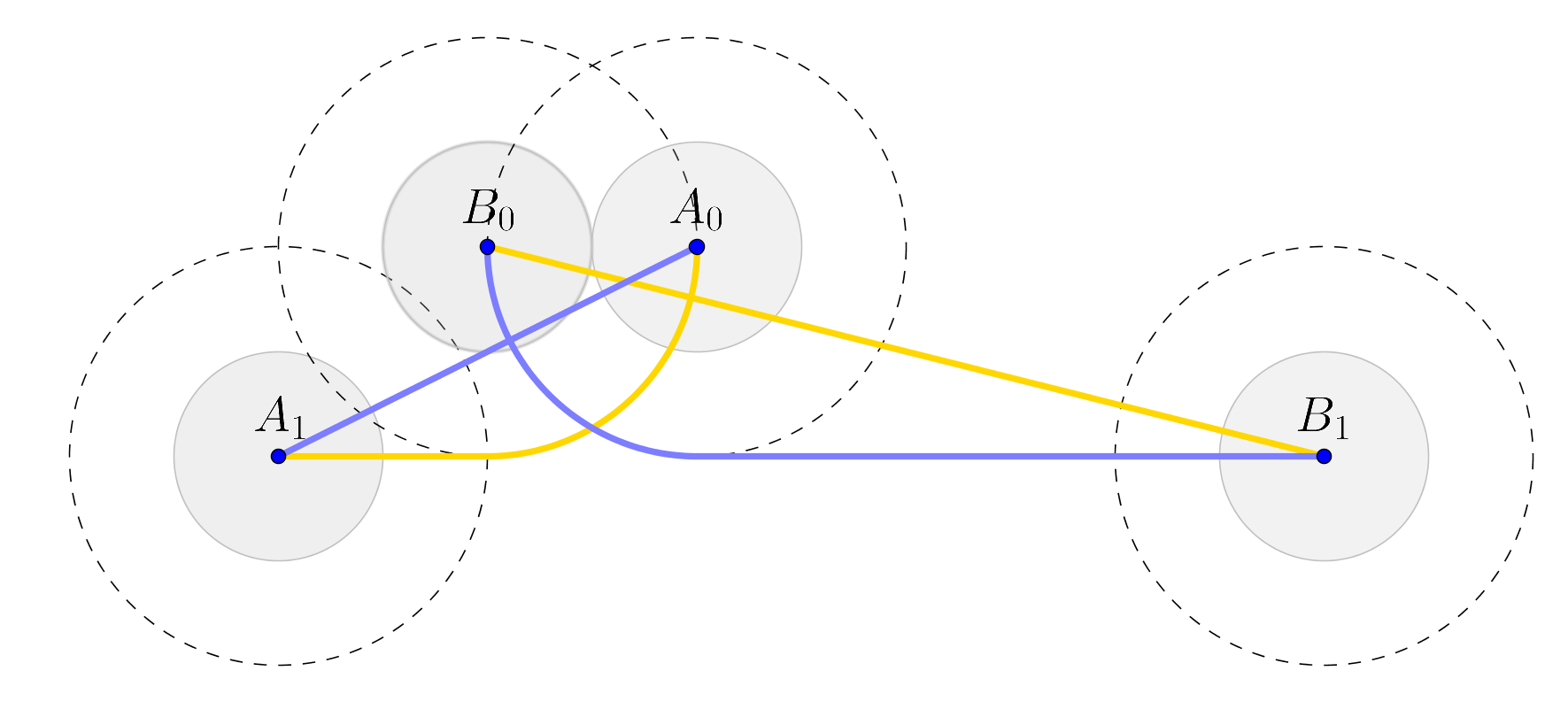}
\caption{Clockwise (yellow) and counter-clockwise (blue) motions satisfying the two properties of Lemma \ref{optkey}. \label{fig:twosol}}
\end{figure}

When a net (counter-)clockwise motion satisfies the two properties of Lemma \ref{optkey}, we say it is \emph{(counter-)clockwise optimal}. Figure \ref{fig:twosol} illustrates two motions from the placement $(A_0, B_0)$ to the placement $(A_1, B_1)$. The blue motion, where $\mathbb B$ first pivots about $A_0$ and moves to $B_1$, followed by $\mathbb A$ moving from $A_0$ to $A_1$, is counter-clockwise optimal. The yellow motion, where $ \mathbb A$ first pivots about $B_0$ and moves to $A_1$, followed by $\mathbb B$ moving from $B_0$ to $B_1$, is clockwise optimal (as one can check following the proofs of Sections \ref{sec:mainprf} and \ref{sec:mainprf2}). However, only the yellow motion is globally optimal.

While property 1 of Lemma \ref{optkey} is typically  easy to verify, property 2 is less straightforward and relies indirectly on an application of Cauchy's surface area formula (Theorem \ref{thm:cauchyorig}) as well as lower bounds we derive below. Theorem \ref{thm:cauchyorig} allows us to translate the problem of measuring lengths of curves into a problem of measuring the support functions of $\widearc{\xi_{\mathbb A}}$ and $\widearc{\xi_{\mathbb B}}$ at certain critical angles. Our approach is to lower bound these support functions to get a lower bound on the optimal path length, and then find a motion matching the lower bound.

\begin{defn}
\label{def:support}
Let $C$ be a closed curve. The \emph{support function} $h_C : S^1 \rightarrow \mathbb{R}$ of $C$ is defined as 
\[
h_C(\theta) = \sup \{ x \cos \theta + y \sin \theta : (x,y) \in C \}.
\]
For an angle $\theta$, the set points that realize the supremum above are called \emph{support points}, and the line oriented at angle $\frac{\pi}{2}+\theta$ going through the support points is called the \emph{support line} (see Figure \ref{fig:supports}).
\end{defn}

\begin{figure}[ht]
\centering
\includegraphics[width=0.4\textwidth]{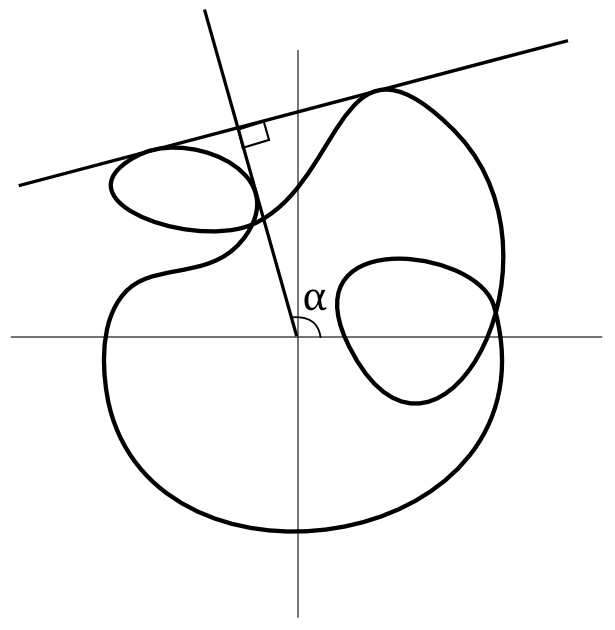}
\caption{The (two) support points and support line at angle $\alpha$ of a given curve. \label{fig:supports}}
\end{figure}

\begin{thm}
\label{thm:cauchyorig}
(Cauchy's surface area formula \cite[Section 5.3]{egg}) Let $C$ be a closed convex curve in the plane and $h_C$ be the support function of $C$. Then 
\begin{equation}
\ell(C) = \int_{0}^{2\pi} h_C(\theta)d\theta.
\end{equation}
\end{thm}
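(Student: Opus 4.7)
The plan is to prove Cauchy's formula first for smooth strictly convex curves by parametrizing $C$ by its outward-normal angle, and then extend to general closed convex curves by a Hausdorff approximation argument. This two-step strategy avoids having to interpret $h_C''$ at points where the curve has straight segments or corners.

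For the smooth case, I would use the classical identity that the point of $C$ whose outward unit normal is $u(\theta) = (\cos\theta, \sin\theta)$ is
$$\gamma(\theta) = h_C(\theta)\,u(\theta) + h_C'(\theta)\,u^\perp(\theta),$$
where $u^\perp(\theta) = (-\sin\theta, \cos\theta)$. The formula is derived from the two conditions $\gamma(\theta)\cdot u(\theta) = h_C(\theta)$ (definition of support function) and $\gamma'(\theta)\cdot u(\theta) = 0$ (tangency); differentiating the first and using the second gives $\gamma(\theta)\cdot u^\perp(\theta) = h_C'(\theta)$. Differentiating $\gamma$ again and using $u'=u^\perp$, $(u^\perp)' = -u$, the $u(\theta)$ components cancel and one obtains $\gamma'(\theta) = (h_C(\theta) + h_C''(\theta))\,u^\perp(\theta)$. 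By strict convexity $h_C + h_C'' > 0$, so the arc-length element is $(h_C + h_C'')\,d\theta$. Integrating and using $2\pi$-periodicity of $h_C'$, the second term integrates to zero, leaving $\ell(C) = \int_0^{2\pi} h_C(\theta)\,d\theta$.

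To handle a general closed convex curve $C$, I would approximate it in Hausdorff distance by a sequence $C_n$ of smooth strictly convex curves, for example by taking the Minkowski sum of the convex region bounded by $C$ with a small disc and then mollifying. The support functions $h_{C_n}$ converge uniformly to $h_C$ (a standard consequence of Hausdorff convergence for convex bodies), so the integrals on the right-hand side converge; and on planar convex bodies the perimeter functional is continuous under Hausdorff convergence, so $\ell(C_n) \to \ell(C)$. Passing to the limit in the identity already established for each $C_n$ yields the formula for $C$.

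The main obstacle is the last step: Hausdorff convergence of curves does not in general preserve length, so one must appeal to the specific (and genuinely convex) fact that perimeter is continuous on planar convex bodies. A more elementary alternative is to prove the formula directly for convex polygons—on each maximal $\theta$-interval for which the support point is a fixed vertex $v_i$, the integrand $h_C(\theta)$ is simply $v_i \cdot u(\theta)$, and after integrating, adjacent contributions combine into $\|v_{i+1}-v_i\|$, so the sum telescopes to the perimeter—and then approximate any closed convex curve from within by inscribed polygons, whose perimeters increase monotonically to $\ell(C)$.
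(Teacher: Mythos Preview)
Your proof is correct and follows the standard route: establish the formula for smooth strictly convex curves via the outward-normal parametrization $\gamma(\theta) = h_C(\theta)u(\theta) + h_C'(\theta)u^\perp(\theta)$, then extend by approximation. Both of your extension arguments (Hausdorff limits of smooth bodies, or monotone approximation by inscribed polygons) work; the polygon route is cleaner since it avoids appealing to the nontrivial continuity of perimeter under Hausdorff convergence.

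However, the paper does not prove this statement at all. Theorem~\ref{thm:cauchyorig} is stated with a citation to an external source (\cite[Section 5.3]{egg}) and is treated as a known black box; the paper moves directly to Corollary~\ref{cor:cauchyorig} without supplying any argument for the theorem itself. So there is no ``paper's own proof'' to compare against---you have supplied a proof where the authors chose simply to quote the literature.
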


As noted in \cite{icking}, it follows from Theorem \ref{thm:cauchyorig} that we can bound the length of two convex curves in the plane:
\begin{corr}
\label{cor:cauchyorig}
Let $C_{1}$ and $C_{2}$ be closed convex curves in the plane. Then the
sum of their lengths can be expressed as follows: \label{thm:cauchy} 
\begin{equation}
\ell(C_{1})+\ell(C_{2}) = \int_{0}^{2\pi}\left(h_{1}(\theta)+h_{2}(\pi+\theta)\right)d\theta, \label{eq:cauchy}
\end{equation}
where $h_{i}$ is the support
function of $C_{i}$.
\end{corr}

In order to assert the optimality of our motions, we use the following observations that provide a bound on the support function of an arbitrary motion. Let $h_{\mathbb A}$ (resp. $h_{\mathbb B}$) denote the support function of $\wideparen{\xi_{\mathbb A}}$ (resp. $\wideparen{\xi_{\mathbb B}}$), and let $h_{\mathbb{AB}}(\theta)$ denote the sum $h_{\mathbb A}(\theta) + h_{\mathbb B}(\pi+\theta)$. Recall that $s$ is the radii sum of the two discs.

\begin{obs}
\label{lower-bound} Let $P_0$ and $P_1$ be two configurations and let $[\theta_0, \theta_1]$ be the range of angles counter-clockwise between the angles of $P_0$ and $P_1$. Then,
for all net counter-clockwise
motions from $P_0$ to $P_1$, and $\theta\in[\theta_0,\theta_1]$, 
$h_{\mathbb{AB}}(\theta)\geq s$.
Similarly, for all net clockwise motions and $\theta\in S^1-[\theta_0,\theta_1]$,
$h_{\mathbb{AB}}(\theta)\geq s$.
\end{obs}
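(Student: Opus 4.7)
The plan is to unpack the definition of $h_{\mathbb{AB}}$ in a way that exposes its geometric meaning, namely as a maximum of signed displacements between points on the two trajectories, and then to exhibit a single time instant in the motion that already forces this maximum to be at least $s$.

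First I would rewrite $h_{\mathbb{AB}}(\theta) = h_{\mathbb A}(\theta) + h_{\mathbb B}(\pi+\theta)$ using Definition \ref{def:support}. Since the support function of the boundary of a convex hull equals the support function of the underlying set, and since $\cos(\pi+\theta) = -\cos\theta$ and $\sin(\pi+\theta) = -\sin\theta$, we have
\[
h_{\mathbb{AB}}(\theta) \;=\; \sup_{A \in \xi_{\mathbb A}} A\cdot(\cos\theta,\sin\theta) \;-\; \inf_{B \in \xi_{\mathbb B}} B\cdot(\cos\theta,\sin\theta) \;=\; \sup_{\substack{A \in \xi_{\mathbb A}\\ B \in \xi_{\mathbb B}}} (A-B)\cdot(\cos\theta,\sin\theta).
\]
Thus $h_{\mathbb{AB}}(\theta)$ is the largest signed projection, onto the direction $\theta$, of any displacement from a point on $\xi_{\mathbb B}$ to a point on $\xi_{\mathbb A}$, where the two points need not be visited simultaneously.

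Next, I would invoke Observation \ref{monotone} to locate a synchronized pair. For a net counter-clockwise motion and $\theta \in [\theta_0,\theta_1]$, the range $I$ of placement angles realized by $m$ contains $\theta$, so there exists some $t \in [0,1]$ at which the placement $(A(t),B(t))$ has angle exactly $\theta$. By definition of placement angle, this means $A(t) - B(t) = r\,(\cos\theta,\sin\theta)$ for some $r \geq 0$; by compatibility, $r = \|A(t) - B(t)\| \geq s$. Taking $A := A(t) \in \xi_{\mathbb A}$ and $B := B(t) \in \xi_{\mathbb B}$ in the supremum above yields
\[
h_{\mathbb{AB}}(\theta) \;\geq\; (A(t) - B(t))\cdot(\cos\theta,\sin\theta) \;=\; r \;\geq\; s.
\]
The net clockwise case is identical once we replace $[\theta_0,\theta_1]$ with $S^1 - [\theta_0,\theta_1]$ and apply the corresponding half of Observation \ref{monotone}.

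There is no serious obstacle here; the only thing to get right is the identification of $h_{\mathbb{AB}}(\theta)$ with $\sup(A-B)\cdot(\cos\theta,\sin\theta)$, because once that identification is made the claim reduces to plugging in the one configuration in which the vector from $B$ to $A$ is collinear with $(\cos\theta,\sin\theta)$ and has length at least $s$.
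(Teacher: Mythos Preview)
Your proof is correct. The paper states this as an Observation without explicit proof, but your argument is precisely the natural justification one would supply: identify $h_{\mathbb{AB}}(\theta)$ as the supremum of $(A-B)\cdot(\cos\theta,\sin\theta)$ over all pairs on the two trajectories, then use Observation~\ref{monotone} to find a time $t$ at which the placement angle equals $\theta$, so that $A(t)-B(t)$ is aligned with direction $\theta$ and has length at least $s$ by compatibility.
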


\begin{obs}
\label{obs:point-wise}
For all support angles, the support function $h_{\mathbb A}$ (resp. $h_{\mathbb B}$) 
is lower bounded by the support function  $H_{\mathbb A}$ (resp. $H_{\mathbb B}$)
of $\overline{A_0A_1}$ (resp. $\overline{B_0B_1}$), since  $\overline{A_0A_1} \subset \wideparen{\xi_{\mathbb A}}$ 
(resp. $\overline{B_0B_1} \subset \wideparen{\xi_{\mathbb B}}$). From this, together with Observation~\ref{lower-bound}, it follows that the support function $h_{\mathbb{AB}}$ is lower bounded point-wise in the counter-clockwise and clockwise cases by
\begin{align}
& \max(H_{\mathbb{A}}(\theta)+H_{\mathbb{B}}(\pi+\theta), s\cdot\mathds{1}_{[\theta_0, \theta_1]}) \tag{net counter-clockwise} \\
& \max(H_{\mathbb{A}}(\theta)+H_{\mathbb{B}}(\pi+\theta), s\cdot\mathds{1}_{S^1-[\theta_0, \theta_1]}) \tag{net clockwise}
\end{align}
\end{obs}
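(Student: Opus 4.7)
The statement is essentially a monotonicity claim about support functions plus a combination with the prior Observation~\ref{lower-bound}, so my plan is to dispatch each of the two ingredients and then combine them by taking a pointwise maximum.

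The first step is to establish $h_{\mathbb A}(\theta) \geq H_{\mathbb A}(\theta)$ for every $\theta$ (and symmetrically for $\mathbb B$). The support function of any bounded set equals the support function of its closed convex hull, since the supremum of a linear functional $(x,y) \mapsto x\cos\theta + y\sin\theta$ over a compact set is attained at an extreme point. Now $A_0$ and $A_1$ both lie on $\xi_{\mathbb A}$, hence in $\mathrm{conv}(\xi_{\mathbb A})$, and by convexity the whole segment $\overline{A_0A_1}$ lies in $\mathrm{conv}(\xi_{\mathbb A})$; equivalently, $\overline{A_0A_1}$ lies inside the region bounded by $\widearc{\xi_{\mathbb A}}$. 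Since taking a supremum over a larger set can only grow, the support function is monotone with respect to set inclusion, which gives the claimed pointwise bound. The same argument for $\mathbb B$ at angle $\pi+\theta$ gives $h_{\mathbb B}(\pi+\theta) \geq H_{\mathbb B}(\pi+\theta)$, and adding the two inequalities yields
\[
h_{\mathbb{AB}}(\theta) \;=\; h_{\mathbb A}(\theta) + h_{\mathbb B}(\pi+\theta) \;\geq\; H_{\mathbb A}(\theta) + H_{\mathbb B}(\pi+\theta).
\]

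The second step is to quote Observation~\ref{lower-bound} directly. In the net counter-clockwise case it gives $h_{\mathbb{AB}}(\theta)\geq s$ for $\theta\in[\theta_0,\theta_1]$; outside this interval the indicator $\mathds{1}_{[\theta_0,\theta_1]}$ vanishes so the inequality $h_{\mathbb{AB}}(\theta)\geq s\cdot \mathds{1}_{[\theta_0,\theta_1]}(\theta)$ holds trivially (the left side is nonnegative, being the support sum of a placement whose centres are distinct and compatible). The analogous statement holds for $\mathds{1}_{S^1-[\theta_0,\theta_1]}$ in the net clockwise case. Combining this with the bound from the first step gives two simultaneous lower bounds on $h_{\mathbb{AB}}(\theta)$, and any quantity dominated by each of two numbers from below is also dominated by their pointwise maximum, which is exactly the expression in the statement.

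There is no real obstacle here, only a small bookkeeping point worth flagging: the notation $\overline{A_0A_1}\subset \widearc{\xi_{\mathbb A}}$ should be read as the segment lying inside the convex region bounded by the closed curve $\widearc{\xi_{\mathbb A}}$, not literally on that boundary curve. Once one uses the standard fact that the support function of a convex region equals that of its boundary, both interpretations produce the same support values and the proof goes through verbatim.
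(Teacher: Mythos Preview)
Your argument is the paper's own: the observation carries its justification inline (support functions are monotone under the inclusion $\overline{A_0A_1}\subset\operatorname{conv}(\xi_{\mathbb A})$, and likewise for $\mathbb B$; then combine with Observation~\ref{lower-bound}), and you spell out exactly those two steps and take the pointwise maximum.

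One small caveat: your parenthetical claim that $h_{\mathbb{AB}}(\theta)$ is always nonnegative is not correct. The quantity $h_{\mathbb{AB}}(\theta)=h_{\mathbb A}(\theta)+h_{\mathbb B}(\pi+\theta)$ is not a width of a single convex body; if, say, $\xi_{\mathbb A}$ lies entirely to the left of $\xi_{\mathbb B}$ along the $x$-axis then $h_{\mathbb{AB}}(0)<0$. This does not undermine the substance of your argument, since the two bounds you actually establish---$h_{\mathbb{AB}}\geq H_{\mathbb{AB}}$ everywhere and $h_{\mathbb{AB}}\geq s$ on the indicated interval---are precisely what the paper asserts and uses downstream. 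The wrinkle is purely in packaging them as a single $\max$ against $s\cdot\mathds{1}$, which is the paper's shorthand rather than something you need to (or can) justify via a separate nonnegativity claim.
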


In the next section we give explicit constructions of optimal motions for many initial-final configuration pairs. This includes, of course, all those whose associated trajectories correspond to two straight segments, what we refer to as
\textbf{straight-line motions}. In other cases, we construct both the clockwise and counter-clockwise optimal motions, one of which must be optimal among all motions.


\section{Optimal paths for two discs}
\label{sec:mainprf}

Our constructions of shortest (counter-)clockwise motions can be summarized by the following theorem:
\begin{thm}
\label{thm:mainthm}
Let $\mathbb A$ and $\mathbb B$ be two discs with radius sum $s$ in an obstacle-free plane with arbitrary initial and final placements $P_0=(A_0, B_0)$ and $P_1= (A_1, B_1)$. Then there is a 
shortest motion from $P_0$ to $P_1$ whose associated trajectories are composed of at most six (straight or circular arcs of radius $s$) segments. 
\end{thm}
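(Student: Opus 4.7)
The plan is to apply Lemma \ref{optkey} separately to the net counter-clockwise and net clockwise cases: in each case construct a motion satisfying both the convexity condition and the minimality condition, and the overall shortest motion is whichever of the two has smaller length. What then needs to be verified is that the constructed motions can be realized with trajectories consisting of at most three pieces each, and hence at most six in total.

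The first step is a lower bound on $\ell(\widearc{\xi_{\mathbb A}}) + \ell(\widearc{\xi_{\mathbb B}})$ over all motions in the given class. By Corollary \ref{cor:cauchyorig} this sum equals $\int_0^{2\pi} h_{\mathbb{AB}}(\theta)\,\diff\theta$, and by Observation \ref{obs:point-wise} the integrand is pointwise at least
\[
\max\bigl(H_{\mathbb A}(\theta)+H_{\mathbb B}(\pi+\theta),\; s\cdot\mathds{1}_{[\theta_0,\theta_1]}(\theta)\bigr)
\]
in the net counter-clockwise case (and the same expression with the complementary indicator in the clockwise case). Integrating yields an explicit lower bound on $\ell(\widearc{\xi_{\mathbb A}}) + \ell(\widearc{\xi_{\mathbb B}})$ that depends only on $P_0$, $P_1$, and $s$. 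The second step is to exhibit a motion whose hulls match this bound angle-by-angle. The recipe is dictated by which of the two terms achieves the maximum at each angle: on angular intervals where $H_{\mathbb A}(\theta)+H_{\mathbb B}(\pi+\theta)$ dominates, each disc's convex hull should coincide in that direction with its chord $\overline{A_0A_1}$ or $\overline{B_0B_1}$; on the sub-interval where the constant $s$ dominates, the two discs must be in contact, with the support lines of the two trajectories' hulls at angle $\theta$ separated by exactly $s$. The natural realization is that each trajectory consists of (at most) an initial straight segment along its chord, a single circular arc of radius $s$ traced while the discs remain in contact, and a final straight segment along its chord, giving at most three pieces per trajectory and at most six in total. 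Convexity (property 1 of Lemma \ref{optkey}) and equality between the hull perimeter sum and the lower bound (property 2) can then be read off directly from the construction, so Lemma \ref{optkey} certifies optimality within each rotational class.

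The main obstacle will be geometric bookkeeping rather than any deep inequality. Specifically, the plan will have to: (i) identify the critical angles at which the transitions between the straight and circular regimes occur, essentially by solving $H_{\mathbb A}(\theta)+H_{\mathbb B}(\pi+\theta)=s$ on the relevant interval, and determine the pivot point (a vertex of one disc's chord) for each arc; (ii) verify feasibility, i.e.\ that every intermediate placement of the constructed motion is compatible, since nothing in the support-function bound by itself precludes a transient overlap; and (iii) handle degenerate subcases uniformly, such as when one or both straight pieces vanish, when the arc is empty (pure straight-line motions), or when the chord endpoints already force contact. The counter-clockwise and clockwise constructions are symmetric but not identical, so each must be carried out in turn; this is where the detailed work of Sections \ref{sec:mainprf} and \ref{sec:mainprf2} will lie.
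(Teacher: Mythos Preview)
Your high-level strategy---lower-bound $\ell(\widearc{\xi_{\mathbb A}})+\ell(\widearc{\xi_{\mathbb B}})$ via Corollary~\ref{cor:cauchyorig} and Observation~\ref{obs:point-wise}, then exhibit a motion whose hull supports match the bound pointwise and invoke Lemma~\ref{optkey}---is exactly the paper's. But two aspects of your realization are off, and the second is a genuine gap.

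First, your ``natural realization'' (straight along the chord, then a single arc while the discs are in contact, then straight along the chord) is not what achieves the lower bound. In the paper's constructions neither disc stays on its chord: one identifies an intermediate point $A_{\text{int}}$ (typically the intersection of suitable tangent lines, or a point on $s\text{-circ}(B_i)$, or the upper intersection of $s\text{-circ}(B_0)$ and $s\text{-circ}(B_1)$), moves $\mathbb A$ from $A_0$ to $A_{\text{int}}$ possibly wrapping around $s\text{-circ}(B_0)$, then moves $\mathbb B$ on tangent segments and an arc around $s\text{-circ}(A_{\text{int}})$, then moves $\mathbb A$ to $A_1$. The straight pieces are tangent lines to radius-$s$ circles, not sub-segments of $\overline{A_0A_1}$ or $\overline{B_0B_1}$; the arcs are traced by one disc while the other is parked, not during simultaneous contact. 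The correct choice of $A_{\text{int}}$ varies with the configuration (the paper partitions $s\text{-cone}(A_0,B_0)$ and $s\text{-corr}(B_0,B_1)$ into zones to organize this), and in some zones $\mathbb A$'s trajectory uses two arcs (around $B_0$ then $B_1$), so the per-trajectory piece count is not uniformly three; the total stays at most six because the pieces trade off.

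Second, and more seriously, you assume the pointwise lower bound of Observation~\ref{obs:point-wise} is always realized by some \emph{feasible} net counter-clockwise motion. It is not. There are configurations (see Claim~\ref{claim:special-clockwise} and Lemma~\ref{lem:optimal-clockwise}) in which the trace that would match the counter-clockwise lower bound forces $\mathbb A$ through $s\text{-circ}(B_0)$, so no feasible counter-clockwise motion attains it. The paper does not repair this by finding a longer counter-clockwise motion and proving it optimal; instead it shows, by a reflection argument, that a specific feasible \emph{clockwise} motion is shorter than the (infeasible) counter-clockwise lower bound, hence shorter than every counter-clockwise motion. This extra comparison lemma is an essential ingredient that your plan does not anticipate: listing ``verify feasibility'' as a bookkeeping item understates it, because when feasibility fails the whole line of argument for that rotational class collapses and a different idea is needed.
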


We devote this entire section to the identification  and exhaustive treatment of various cases of Theorem \ref{thm:mainthm}. The paths that we identify in each case also allow us to provide the following unified characterization of the optimal path length, covering all cases:
\begin{corr}
\label{cor:integral}
Let $H_\mathbb{A}$ and $H_\mathbb{B}$ be the support functions of the segments $\overline{A_0A_1}$ and $\overline{B_0B_1}$ respectively, $H_{\mathbb{AB}}(\theta):=H_{\mathbb{A}}(\theta)+H_{\mathbb{B}}(\pi+\theta)$, and $m$ be an optimal motion between $P_0$ and $P_1$. Let $[\theta_0,\theta_1]$ be the range of angles counter-clockwise between $P_0$ and $P_1$. Then 
\begin{equation*}
\ell(m) = \min\left( \int_0^{2\pi} \max(H_{\mathbb{AB}}(\theta), s\cdot\mathds{1}_{[\theta_0, \theta_1]}) \textrm{d} \theta, \int_0^{2\pi} \max(H_{\mathbb{AB}}(\theta), s\cdot\mathds{1}_{S^1-[\theta_0, \theta_1]}) \textrm{d} \theta \right) - |\overline{A_0A_1}| - |\overline{B_0B_1}|
\end{equation*}
where $\mathds{1}_{[a, b]}$ is the indicator function of the interval $[a,b]$.
\end{corr}

Though the expression in Corollary \ref{cor:integral} looks daunting, the only difference between the two integrals is the indicator function used. The support functions themselves can be expressed in closed form and the integrals are clearly lower bounds on the path length by Corollary \ref{thm:cauchy} and Observation~\ref{obs:point-wise}. We emphasize that the integrals can be expressed in closed form if needed, albeit with some cases involved.

We now introduce some additional tools that will help us classify the initial and final placements into different cases.

\begin{defn}
Let $p$ and $q$ be arbitrary points in the plane. 
\begin{enumerate}
\item[(a)] We denote by $s\text{-circ}(p)$ the circle of radius $s$ centred at point $p$.
\item[(b)] We denote by $s\text{-corr}(p,q)$ the  \textbf{$s$-corridor} associated with $p$ and $q$, defined to be the Minkowski sum of the line segment $\overline{pq}$ and an open disc of radius $s$.
\item[(c)] We denote by $s\text{-cone}(p,q)$ the cone formed by all half-lines from $p$ that intersect  $s\text{-circ}(q)$.
\end{enumerate}
\end{defn}


If disc $\mathbb A$ is centred at location $A$ then $s\text{-circ}(A)$ corresponds to the locations forbidden to the centre of disc $\mathbb B$ in a compatible placement (see dotted circles in Figure~\ref{fig:twosol}). The corridors $s\text{-corr}(A_0,A_1)$ and $s\text{-corr}(B_0,B_1)$ play a critical role in partitioning initial and final placement pairs for which  straight-line trajectories (which are clearly optimal) are possible. Specifically, if point $A \not\in s\text{-corr}(B_0,B_1)$  then the line segment $\overline{B_0 B_1}$ does not intersect $s\text{-circ}(A)$; i.e. it is possible to translate $\mathbb B$ from $B_0$ to $B_1$ without interference from disc $\mathbb A$ with centre at point $A$. Similarly, if point $B \not\in s\text{-corr}(A_0,A_1)$ it is possible to translate $\mathbb A$ from $A_0$ to $A_1$ without interference from disc $\mathbb B$ with centre at point $B$. 

What follows is a case analysis of various scenarios for the
initial and final placements. We first classify the cases by the containment of $A_0$, $A_1$, $B_0$, $B_1$ within
$s\text{-corr}(A_0,A_1)$ and $s\text{-corr}(B_0,B_1)$ (cf. Table \ref{straight-line}). While there might appear to be
16 cases --- since each point is either contained within a corridor or not --- they cluster into just three disjoint collections, referred to as Cases 1, 2 and 3. These are further reduced by symmetries which 
include (i) interchanging the initial and final placements and (ii) switching the roles of $\mathbb A$ and $\mathbb B$. 

\begin{table}[ht]
\begin{center}
\begin{tabular}{|c|>{\centering}p{2cm}|>{\centering}p{2cm}|>{\centering}p{2cm}|>{\centering}p{2cm}|c|}
\hline 
 Case & $A_0\in s\text{-corr}(B_0,B_1)$ & $A_1\in s\text{-corr}(B_0,B_1)$ & $B_0\in s\text{-corr}(A_0,A_1)$ & $B_1\in  s\text{-corr}(A_0,A_1)$ & Type of motion\tabularnewline
\hline 
 1a & {\tt false} & {\tt *} & {\tt *} & {\tt false} & straight-line $(B_0\rightarrow B_1,A_0\rightarrow A_1)$\tabularnewline
\cline{2-5} 
 1b & {\tt *} & {\tt false} & {\tt false} & {\tt *} & 
\tabularnewline
\hline 
 2a & {\tt true} & {\tt *} & {\tt true} & {\tt *} & See Section \ref{sec:newcase2} \tabularnewline
\cline{2-5} 
 2b    & {\tt *} & {\tt true} & {\tt *} & {\tt true} &  \tabularnewline
\hline
 3a     & {\tt true} & {\tt true} & {\tt false} & {\tt false} & See Section \ref{sec:the-rest-2b} \tabularnewline
\cline{2-5} 
 3b    & {\tt false} & {\tt false} & {\tt true} & {\tt true} &  \tabularnewline
\hline 
\end{tabular}
\par\end{center}

\caption{All cases of possible motions. The  $*$ entries mean that the specified condition is unconstrained, i.e. it can be either $\tt true$ or $\tt false$. We leave some cases out due to symmetry. \label{straight-line}}
\end{table}

In all cases our specified motion has a common form -- with a possible interchange of the roles of $\mathbb A$ and $\mathbb B$. We identify an intermediate position $A_\text{int}$  (possibly $A_0$ or $A_1$) and perform the following sequence of (possibly degenerate) moves:

\begin{enumerate}
\item Move $\mathbb A$ on the shortest path from $A_0$ to $A_\text{int}$, avoiding $s\text{-circ}(B_0)$;
\item Move $\mathbb B$ on the shortest path from $B_0$ to $B_1$, avoiding $s\text{-circ}(A_\text{int})$; then
\item Move $\mathbb A$ on the shortest path from $A_\text{int}$ to $A_1$, while avoiding $s\text{-circ}(B_1)$.
\end{enumerate}

Without loss of generality, assume that our initial and final configurations have been \textbf{normalized} as follows: $B_0$ and $B_1$ lie on the $x$-axis with $B_0$ at the origin, $B_1$ right of $B_0$. In all but a few special cases, we will only examine motions that are net counter-clockwise; net clockwise optimal motions can be obtained by reflecting the initial and final placements across the $x$-axis and then examining net counter-clockwise motions. 

The net counter-clockwise orientation of our proposed motion $m=(\xi_{\mathbb A},\xi_{\mathbb B})$ as well as the convexity of $\xi_{\mathbb A}$ and $\xi_{\mathbb B}$ will typically be straightforward to verify.
To show the optimality of our motions, we show that the support function of our motions achieves the point-wise lower bound established in Observation~\ref{obs:point-wise}.

\subsection{Examples of counter-clockwise optimal motions}
\label{sec:special}
Before we attempt to identify optimal motions, it will be instructive to examine a special case, illustrated in Figure \ref{fig:special}. This case will provide the simplest non-trivial example of an optimal motion as well as an illustration to the form of our proofs.

\begin{figure}[h]
\centering
\includegraphics[width=0.8\textwidth]{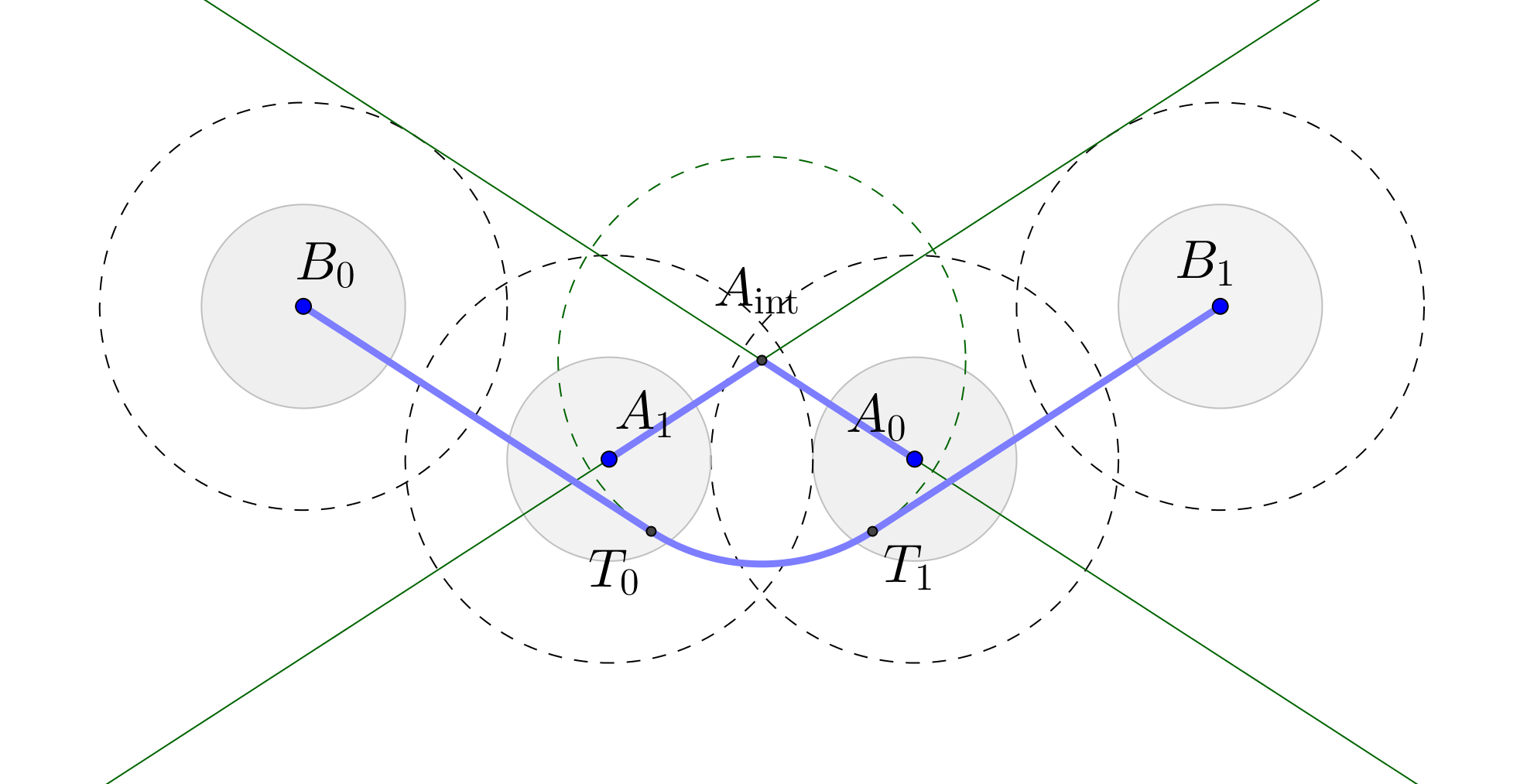}
\caption{\label{fig:special}}
\end{figure}

Consider the case shown in Figure \ref{fig:special}, where the $A_i$'s are in $s\text{-corr}(B_0,B_1)$, the $B_i's$ are on the $x$-axis, and the $A_i's$ are symmetric about the perpendicular bisector of $\overline{B_0B_1}$. 


We define some points useful to our construction of the optimal motion. Let $a_0$ and $a_1$ be the upper tangents from $A_0$ to $s\text{-circ}(B_0)$, and $A_1$ to $s\text{-circ}(B_1)$ respectively. These two tangents intersect in a point $A_\text{int}$ 
on the perpendicular bisector of $\overline{B_0B_1}$.
Let $b_0$ and $b_1$ be the lower tangents from $B_0$ and $B_1$ to $s\text{-circ}(A_\text{int})$ respectively, and let $T_0$ and $T_1$ be intersection points of $b_0$ and $b_1$ with $s\text{-circ}(A_\text{int})$. Note that by construction, $a_0$ is parallel to $b_1$ and $a_1$ is parallel to $b_0$.

\begin{claim}
\label{claim:special}
The following is a counter-clockwise optimal motion (see bolded outline in Figure~\ref{fig:special}):

\begin{enumerate}
\item Move $\mathbb A$ from $A_0$ to $A_\text{int}$;
\item Move $\mathbb B$ from $B_0$ to $B_1$, avoiding $s\text{-circ}(A_\text{int})$. This involves translating $\mathbb B$ from $B_0$ to $T_0$, rotating around $s\text{-circ}(A_\text{int})$ in a range of angles $[\beta_0, \beta_1]$, and finally translating from $T_1$ to $B_1$; then
\item Move $\mathbb A$ from $A_\text{int}$ to $A_1$.
\end{enumerate}

\end{claim}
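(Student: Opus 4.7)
My plan is to apply Lemma~\ref{optkey} to the proposed motion $m=(\xi_{\mathbb{A}},\xi_{\mathbb{B}})$. Once we confirm that $m$ is feasible and net counter-clockwise, counter-clockwise optimality reduces to checking (i) convexity of both traces, and (ii) minimality of $\ell(\widearc{\xi_{\mathbb{A}}})+\ell(\widearc{\xi_{\mathbb{B}}})$ among all net counter-clockwise motions from $P_0$ to $P_1$. Feasibility is built into the construction (at each phase only one disc moves and its path is explicitly chosen to avoid the $s$-circle around the resting disc), and the net counter-clockwise orientation follows by tracking the angle from $\mathbb{B}$ to $\mathbb{A}$ phase-by-phase: lifting $\mathbb{A}$ to $A_{\mathrm{int}}$ rotates the angle counter-clockwise, the arc phase sweeps $\mathbb{B}$ counter-clockwise around $A_{\mathrm{int}}$, and descending to $A_1$ continues the sweep by symmetry. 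Condition (i) is immediate: $\xi_{\mathbb{A}}$ is the polyline $A_0 A_{\mathrm{int}} A_1$, which together with $\overline{A_0A_1}$ bounds the triangle $\triangle A_0 A_{\mathrm{int}} A_1$; $\xi_{\mathbb{B}}$ consists of two tangent segments and a circular arc that join tangentially at $T_0,T_1$ and curve consistently, so $\xi_{\mathbb{B}}\cup\overline{B_0B_1}$ bounds a convex region.

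The substance of the proof is condition (ii). By Corollary~\ref{cor:cauchyorig}, $\ell(\widearc{\xi_{\mathbb{A}}})+\ell(\widearc{\xi_{\mathbb{B}}})=\int_0^{2\pi} h_{\mathbb{AB}}(\theta)\,d\theta$, and Observation~\ref{obs:point-wise} supplies the pointwise lower bound
\[
L(\theta):=\max\!\bigl(H_{\mathbb{A}}(\theta)+H_{\mathbb{B}}(\pi+\theta),\; s\cdot\mathds{1}_{[\theta_0,\theta_1]}(\theta)\bigr)\leq h_{\mathbb{AB}}(\theta),
\]
valid for every net counter-clockwise motion. It therefore suffices to show that for our motion $m$, equality $h_{\mathbb{AB}}(\theta)=L(\theta)$ holds for every $\theta$; integrating then yields (ii). I would partition $S^1$ into the angular range $\Theta^{*}$ on which $A_{\mathrm{int}}$ is the supporting vertex of $\widearc{\xi_{\mathbb{A}}}$ and its complement. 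Outside $\Theta^{*}$ the supporting point of $\widearc{\xi_{\mathbb{A}}}$ lies in $\{A_0,A_1\}$, and the corresponding supporting point of $\widearc{\xi_{\mathbb{B}}}$ for direction $\pi+\theta$ lies in $\{B_0,B_1\}$ rather than on the arc, so $h_{\mathbb{AB}}(\theta)=H_{\mathbb{A}}(\theta)+H_{\mathbb{B}}(\pi+\theta)$, which coincides with $L(\theta)$.

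Inside $\Theta^{*}$ the crucial geometric point is that the parallelisms $a_0\parallel b_1$ and $a_1\parallel b_0$ built into the construction, together with the tangency of $b_0,b_1$ to $s\text{-circ}(A_{\mathrm{int}})$ at $T_0,T_1$, force the range of outward normals along the arc from $T_0$ to $T_1$ to coincide with $\pi+\Theta^{*}$. Consequently, for every $\theta\in\Theta^{*}$ the $\mathbb{A}$-support is $A_{\mathrm{int}}$ and the $\mathbb{B}$-support for direction $\pi+\theta$ lies on the arc, so
\[
h_{\mathbb{A}}(\theta)+h_{\mathbb{B}}(\pi+\theta)=A_{\mathrm{int}}\!\cdot\!(\cos\theta,\sin\theta)+\bigl(-A_{\mathrm{int}}\!\cdot\!(\cos\theta,\sin\theta)+s\bigr)=s.
\]
To finish, I would verify that $\Theta^{*}\subseteq[\theta_0,\theta_1]$ and that $H_{\mathbb{A}}(\theta)+H_{\mathbb{B}}(\pi+\theta)\leq s$ throughout $\Theta^{*}$, so that $L(\theta)=s$ on $\Theta^{*}$ and pointwise equality holds there as well.

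I expect the principal obstacle to be exactly this last piece of angular bookkeeping: pinning down $\Theta^{*}$, showing it sits inside $[\theta_0,\theta_1]$, and confirming that $H_{\mathbb{AB}}\leq s$ throughout $\Theta^{*}$. The reflection symmetry of the configuration about the perpendicular bisector of $\overline{B_0B_1}$ makes the two halves of the analysis identical, but one still has to handle the boundary transitions between the ``$A_{\mathrm{int}}$/arc'' regime and the ``endpoint/endpoint'' regime carefully.
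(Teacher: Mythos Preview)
Your proposal is correct and follows essentially the same route as the paper: verify convexity directly, then use Corollary~\ref{cor:cauchyorig} together with the pointwise lower bound of Observation~\ref{obs:point-wise}, partitioning $S^1$ into the range where $A_{\mathrm{int}}$ is the $\mathbb{A}$-support (the paper calls this $[\beta_0,\beta_1]$) and its complement, and exploiting the parallelisms $a_0\parallel b_1$, $a_1\parallel b_0$ to see that on this range the $\mathbb{B}$-support at angle $\pi+\theta$ lies on the arc, giving $h_{\mathbb{AB}}=s$. The only difference is cosmetic: the paper names the interval via the arc of $\mathbb{B}$'s rotation and then observes that $A_{\mathrm{int}}$ is a support point exactly there, while you name it via $A_{\mathrm{int}}$ first; and your closing ``angular bookkeeping'' (that $\Theta^*\subseteq[\theta_0,\theta_1]$ and $H_{\mathbb{AB}}\le s$ on $\Theta^*$) is left implicit in the paper but is in fact automatic, since $h_{\mathbb{AB}}\ge H_{\mathbb{AB}}$ always and $h_{\mathbb{AB}}=s$ on $\Theta^*$ forces $H_{\mathbb{AB}}\le s$ there.
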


\begin{proof}
It is easy to check that property 1 (convexity) of Lemma \ref{optkey} is satisfied. To show that property 2 (minimality) holds as well we verify that $h_{\mathbb{AB}}(\theta)$ matches its lower bound. By Observation~\ref{obs:point-wise}, we may check that for all angles $\theta$, either $h_{\mathbb{AB}}(\theta)=s$ for $\theta$ in the range of angles counter-clockwise between the initial and final placement, or is determined by $\mathbb A$ and $\mathbb B$ in their initial or final position. 

By construction, $\beta_{0}$ is normal to the orientation of $b_0$ (as well as $a_0$) and $\beta_{1}$ is normal to $b_1$ (as well as $a_1$). This ensures that for the range of angles $[\beta_{0},\beta_{1}]$,
$A_{\text{int}}$ is the support point of $h_\mathbb{A}(\theta)$ while the support point of $h_\mathbb{B}(\theta+\pi)$ lies on the arc of the circle traversed by $\mathbb B$. Hence $h_{\mathbb{AB}}(\theta)=s$ for $\theta \in[\beta_{0},\beta_{1}]$. 

Furthermore, $A_{\text{int}}$ is only a support point for angles in $[\beta_{0},\beta_{1}]$, since $\mathbb A$ moves along tangents $a_0$ and $a_1$. Thus for angles in $S^1-[\beta_{0},\beta_{1}]$, either $A_0$ or $A_1$ must be one support point, and either $B_0$ or $B_1$ must be the other.
\end{proof}

\begin{remark}
Even if the positions of $A_0$ and $A_1$ were swapped in the motion above, the trace of the optimal counter-clockwise motion would remain the same. The proof of optimality would proceed as above, using instead the tangents from $A_0$ to $B_1$ and $A_1$ to $B_0$ as $a_1$ and $a_0$ respectively.
\end{remark}

In the proof above, $A_\text{int}$ remains a support during the angles of $\mathbb B$'s rotation even if we shift it slightly vertically upwards.\footnote{However, the shifted $A_\text{int}$ is a support outside of the angles of rotation as well, which means $h_{\mathbb{AB}}$ does not achieve its lower bound outside of $[\beta_{0},\beta_{1}]$.} This motivates the following definition:

\begin{defn}
\label{def:dominating}
Let $p$ be a point in $s\text{-corr}(B_0, B_1)$. Let $\mathcal{R}$ be the region below both upper tangents from $p$ to $B_0$ and $B_1$. We call $\mathcal{R}$ the \textbf{dominated region} of $p$ with respect to $s\text{-corr}(B_0, B_1)$. For any point $q \in \mathcal{R}$, we say that $p$ \textbf{dominates} $q$.
\end{defn}

Note that if $p$ dominates $A_0$ and $A_1$, then substituting $p$ for $A_\text{int}$ in the proposed motion for Figure~\ref{fig:special} would maintain the property that the support function $h_{\mathbb{AB}}(\theta)$ is exactly $s$ in the angles of $\mathbb B$'s rotation. In fact, we have the following general lemma:

\begin{lemma}
\label{lem:support}
Let $p$ be any point that dominates $A_0$ and $A_1$, and let $m$ be any motion of the form:

\begin{enumerate}
\item Move $\mathbb A$ from $A_0$ to $p$ in a motion $m_1$, staying entirely within the region dominated by $p$;
\item Move $\mathbb B$ on the shortest path from $B_0$ to $B_1$ that travels below $s\text{-circ}(p)$. This involves moving $\mathbb B$ on a tangent segment $b_0$ from $B_0$ to $s\text{-circ}(p)$, rotating around $s\text{-circ}(A_\text{int})$ in a range of angles $[\beta_0, \beta_1]$, and moving on a tangent segment $b_1$ from $s\text{-circ}(p)$ to $B_1$; then
\item Move $\mathbb A$ from $p$ to $A_1$ in a motion $m_2$, staying entirely within the region dominated by $p$.
\end{enumerate}

For any such motion $m$, $h_{\mathbb{AB}}(\theta)=s$ for $\theta \in [\beta_0, \beta_1]$. Furthermore, if $m_1$ and $m_2$ form a convex trace when concatenated together, and the tangents of $m_1$ and $m_2$ at $p$ are parallel to $b_1$ and $b_0$ respectively, then $p$ is a support point iff the support angle is in the range $[\beta_0, \beta_1]$. 

\end{lemma}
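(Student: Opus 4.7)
The plan is to identify the support points of $\wideparen{\xi_{\mathbb A}}$ and $\wideparen{\xi_{\mathbb B}}$ for each $\theta \in [\beta_0,\beta_1]$ and sum the resulting support values, following the same blueprint as the proof of Claim \ref{claim:special} but in the greater generality allowed by $p$ being any dominating point. For $\wideparen{\xi_{\mathbb B}}$ in direction $\pi+\theta$, the candidate support points lie either on the tangent segments $b_0,b_1$ or on the arc of $s\text{-circ}(p)$. By the choice of $\beta_0,\beta_1$ as the normals of $b_0,b_1$, the arc traced by $\mathbb B$ is set up so that for each $\theta \in [\beta_0,\beta_1]$, the point of the arc maximizing the dot product in direction $\pi+\theta$ is exactly $p-s(\cos\theta,\sin\theta)$. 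A routine check (each tangent segment lies in the half-plane bounded by its own tangent line not containing $p$, whereas the arc bulges into the opposite half-plane) shows that this point also dominates the two straight portions of $\xi_{\mathbb B}$, yielding
\[
h_{\mathbb B}(\pi+\theta) \;=\; s - p\cdot(\cos\theta,\sin\theta).
\]

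Next I would identify $p$ as the support point of $\wideparen{\xi_{\mathbb A}}$ in direction $\theta$. Since $\xi_{\mathbb A}=m_1\cup m_2$ is contained in the region $\mathcal{R}$ dominated by $p$, and $p$ lies on $\wideparen{\xi_{\mathbb A}}$, it suffices to show that $p$ maximizes the dot product with $(\cos\theta,\sin\theta)$ over $\mathcal{R}$. By Definition \ref{def:dominating}, $\mathcal{R}$ is bounded near $p$ by the two upper tangents from $p$ to $s\text{-circ}(B_0)$ and $s\text{-circ}(B_1)$; because $s\text{-circ}(p)$ and $s\text{-circ}(B_i)$ have equal radius, the upper tangent from $p$ to $s\text{-circ}(B_i)$ is parallel to the tangent $b_i$ from $B_i$ to $s\text{-circ}(p)$ (the same equal-radius reciprocity already invoked for $a_i,b_i$ in Claim \ref{claim:special}). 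Hence the outward-normal cone of $\mathcal{R}$ at the apex $p$ is exactly $[\beta_0,\beta_1]$, so $p$ is the maximizer in direction $\theta$ for every $\theta$ in this range and $h_{\mathbb A}(\theta)=p\cdot(\cos\theta,\sin\theta)$. Summing the two contributions gives $h_{\mathbb{AB}}(\theta)=s$.

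For the ``furthermore'' part, the extra hypotheses upgrade the inclusion $\xi_{\mathbb A}\subseteq\mathcal{R}$ to a precise local description of $\wideparen{\xi_{\mathbb A}}$ at $p$: convexity of the concatenated trace ensures $\wideparen{\xi_{\mathbb A}}$ coincides with the boundary of $\mathrm{conv}(m_1\cup m_2)$, and the tangent assumption forces the one-sided tangents of this boundary at $p$ to be parallel to $b_1$ (arriving from $m_1$) and $b_0$ (leaving on $m_2$). The outward-normal cone of $\wideparen{\xi_{\mathbb A}}$ at $p$ is therefore exactly $[\beta_0,\beta_1]$, so $p$ is a support point of $\wideparen{\xi_{\mathbb A}}$ if and only if $\theta$ lies in this interval.

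The main technical obstacle is the angular bookkeeping, in particular establishing the equal-radius tangent reciprocity that identifies the upper tangents from $p$ to the circles $s\text{-circ}(B_i)$ as being parallel to the tangent segments $b_0,b_1$. Once that parallelism is in hand, the outward-normal cone of $\mathcal{R}$ at $p$ matches the rotation range $[\beta_0,\beta_1]$ exactly, and the remaining support-function identifications follow immediately.
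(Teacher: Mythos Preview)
Your proposal is correct and follows essentially the same approach as the paper, which proves Lemma~\ref{lem:support} by a one-line reference back to Claim~\ref{claim:special} (``substituting the tangents of $m_1$ and $m_2$ at $p$ for $a_0$ and $a_1$''). You have simply unpacked that reference: the equal-radius reciprocity you isolate (upper tangent from $p$ to $s\text{-circ}(B_i)$ parallel to the lower tangent $b_i$ from $B_i$ to $s\text{-circ}(p)$) is exactly the parallelism ``$a_0\parallel b_1$, $a_1\parallel b_0$'' used in Claim~\ref{claim:special}, and your outward-normal-cone language is a clean repackaging of the support-angle bookkeeping there. If anything, your treatment of the first assertion is slightly tidier than the paper's, since arguing via containment in the convex wedge $\mathcal{R}$ makes no use of the tangents of $m_1,m_2$ at $p$, which are only hypothesized in the ``furthermore'' clause.
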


\begin{proof}
The proof follows exactly the same analysis as the argument for $A_\text{int}$ in the proof of Claim~\ref{claim:special}, substituting the tangents of $m_1$ and $m_2$ at $p$ for $a_0$ and $a_1$.
\end{proof}

Lemma~\ref{lem:support} will allow us to exploit the commonality in many of the proofs we use in subsequent cases, as most motions will involve rotating around at least 1 pivot. 

As an example, consider Figure~\ref{fig:3rotation}, which shows an optimal counter-clockwise motion that we'll encounter in Case 2. In this motion, $\mathbb A$ first moves from $A_0$ to $A_\text{int}$, followed by $\mathbb B$ rotating from $B_0$ to $B_1$, and finished by moving $A_\text{int}$ to $A_1$. Lemma~\ref{lem:support} allows us to immediately say that $A_\text{int}$ is a support point exactly when $\mathbb B$ rotates from $B_0$ to $B_1$, since the motions from $A_0$ and $A_1$ to $A_{\text{int}}$ stay within the region dominated by $A_{\text{int}}$. The motion is also optimal, as the combined movement of $\mathbb A$ is convex, and the tangents at $A_\text{int}$ are parallel to the tangents of $s\text{-circ}(A_\text{int})$ at $B_0$ and $B_1$.

\begin{figure}[ht]
\centering
\includegraphics[width=0.7\textwidth]{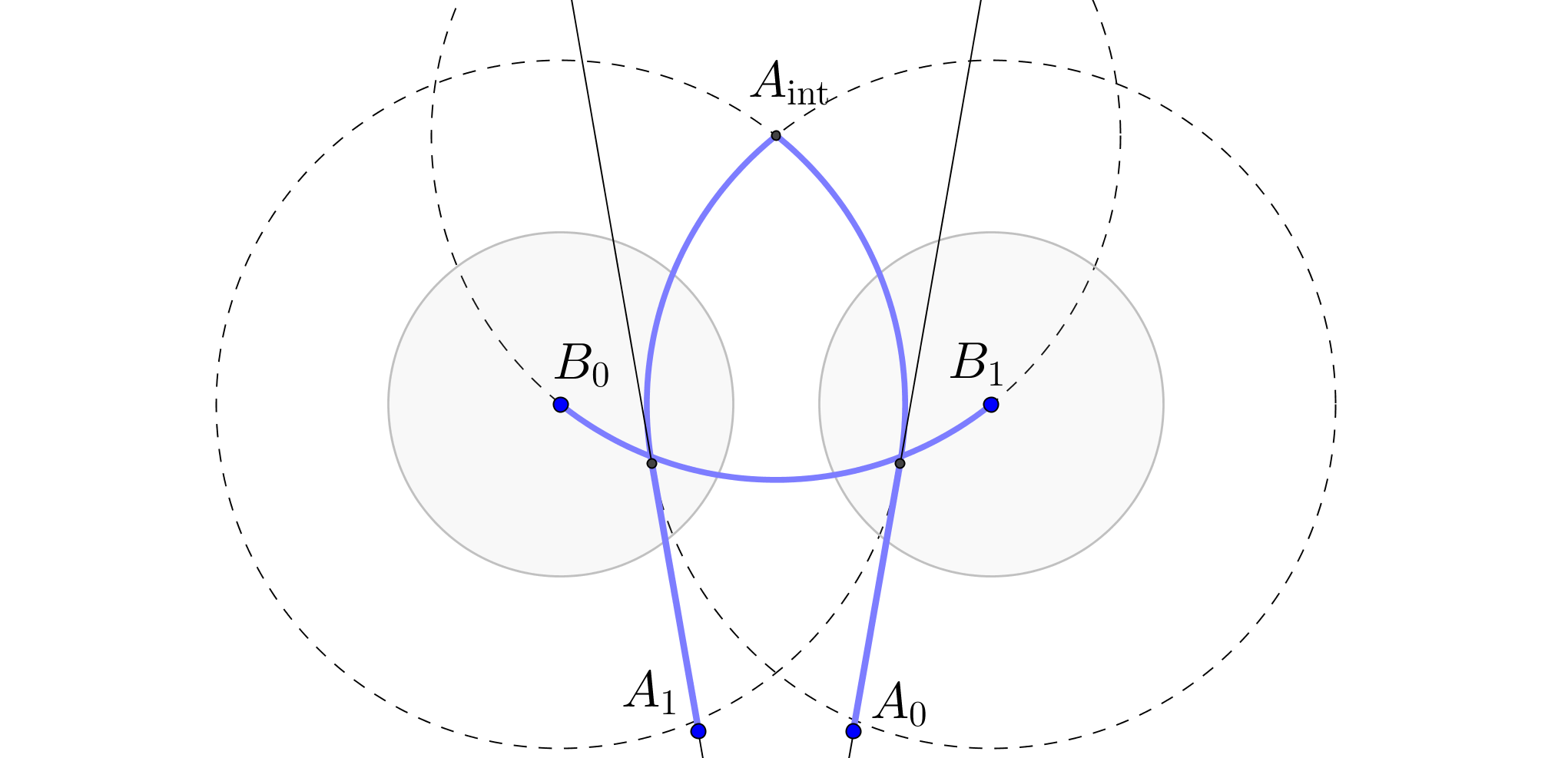}
\caption{\label{fig:3rotation}}
\end{figure}


\subsection{Certifying non-optimality of counter-clockwise motions}

The proofs we use in our case analysis will largely resemble the special case discussed in Section~\ref{sec:special}. Nevertheless for certain configurations, the tools we've developed in the previous section seem unable to show the optimality of net counter-clockwise motions. In such situations, we will show that the optimal net clockwise motion is shorter than any net counter-clockwise motion. 
In this section we analyse another special case, which will lead us to a set of placements for which we can prove that the optimal motion is net clockwise. This will help us deal with subcases for which the demonstration of net counter-clockwise optimal motions seems to be beyond the reach of our techniques.

\begin{figure}[ht]
\centering
\includegraphics[width=\textwidth]{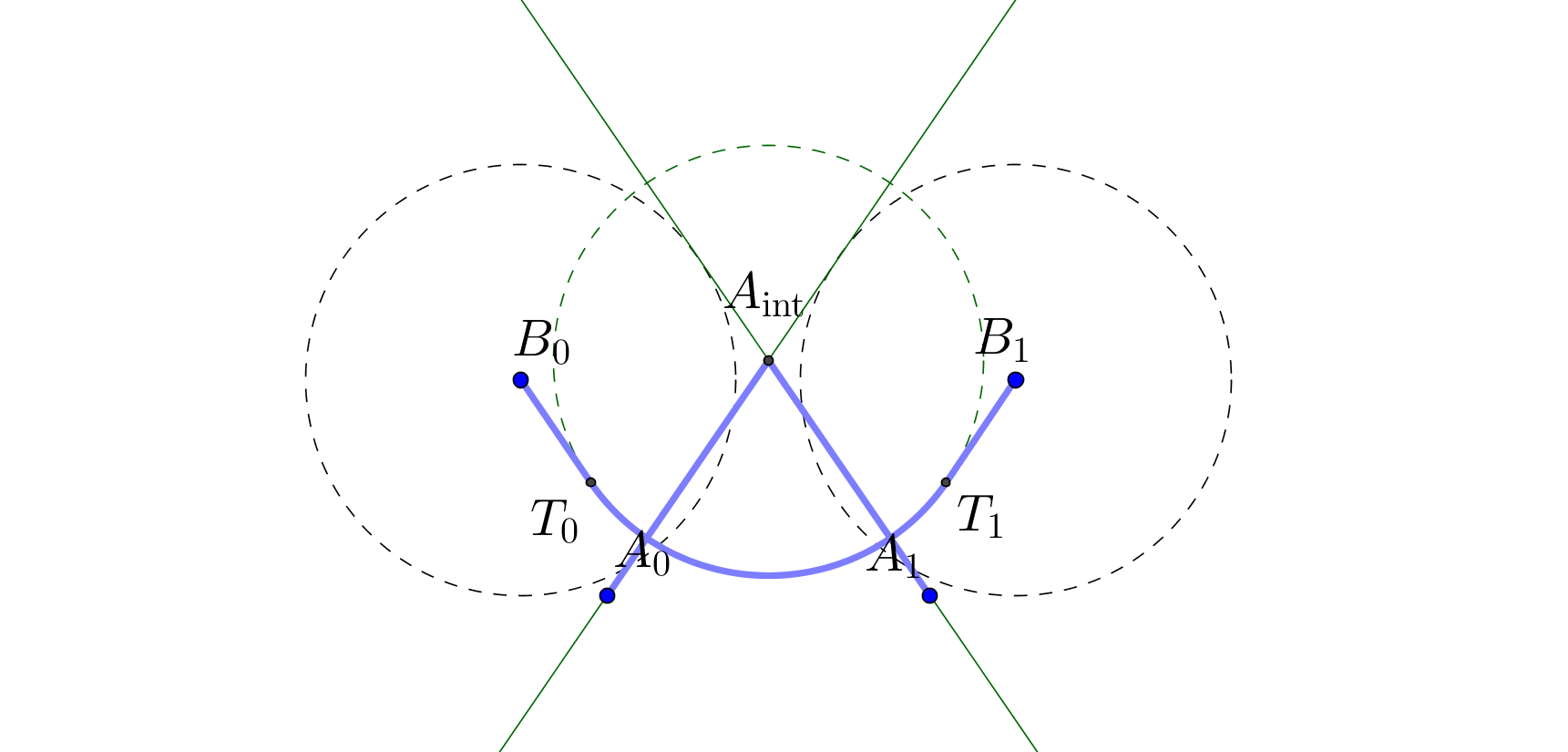}
\caption{\label{fig:blocked2}}
\end{figure}

Let us consider a variant of Figure~\ref{fig:special} where the $B_i$'s are now closer together, as depicted in Figure~\ref{fig:blocked2} and the positions of $A_0$ and $A_1$ are swapped. Again, we may draw the appropriate upper-tangents from the $A_i$'s and compute an intermediate point $A_\text{int}$. By Lemma~\ref{lem:support}, the trace length of the ``motion''  $m^\prime$ outlined in Figure~\ref{fig:blocked2} is no greater than that of any net-counterclockwise motion. However, the trace given in Figure~\ref{fig:blocked2} is not feasible, as it requires $A_0$ to move through $s\text{-circ}(B_0)$. In this case, we do not know of any counter-clockwise optimal motion for which optimality can be shown with Cauchy's surface area formula. As it turns out, we may sidestep this apparent difficulty by considering clockwise optimal motions. 

\begin{claim}
The optimal motion to Figure~\ref{fig:blocked2} is net clockwise. \label{claim:special-clockwise}
\end{claim}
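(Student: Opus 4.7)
The plan is to produce an explicit feasible net clockwise motion $m_{cw}$ from $P_0$ to $P_1$ whose length equals the clockwise lower bound integral of Observation~\ref{obs:point-wise}, and then to show that this integral is at most the counter-clockwise lower bound integral. Since by Lemma~\ref{lem:support} the length of the infeasible trace $m'$ equals the counter-clockwise lower bound, and since every feasible net counter-clockwise motion $m$ satisfies $\ell(m) \ge \ell(m')$, the inequality $\ell(m_{cw}) \le \ell(m')$ will yield $\ell(m_{cw}) \le \ell(m)$ for every feasible net counter-clockwise motion $m$, which is exactly what is to be proved.

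First I would construct $m_{cw}$ by mirroring the construction of Claim~\ref{claim:special} across the $x$-axis: let $\widetilde{A}_\text{int}$ be the intersection of the \emph{lower} tangent from $A_0$ to $s\text{-circ}(B_0)$ and the lower tangent from $A_1$ to $s\text{-circ}(B_1)$, so that $\widetilde{A}_\text{int}$ lies on the opposite side of the $x$-axis from the $A_i$ and lies outside $s\text{-corr}(B_0,B_1)$. Define the three-stage motion (i) slide $\mathbb A$ from $A_0$ along its lower tangent, possibly around a short arc of $s\text{-circ}(B_0)$, to $\widetilde{A}_\text{int}$; (ii) translate $\mathbb B$ from $B_0$ to $B_1$, which is unobstructed since $\widetilde{A}_\text{int} \notin s\text{-corr}(B_0,B_1)$; (iii) move $\mathbb A$ from $\widetilde{A}_\text{int}$ along the mirror tangent up to $A_1$. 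Feasibility, convexity of each trajectory, and the net clockwise orientation are immediate from the construction. Applying the clockwise mirror of Lemma~\ref{lem:support} to $m_{cw}$ gives $h_{\mathbb{AB}}(\theta) = s$ on the clockwise arc $S^1-[\theta_0,\theta_1]$ and $h_{\mathbb{AB}}(\theta) = H_{\mathbb{AB}}(\theta)$ off this arc, so by Corollary~\ref{cor:cauchyorig},
\begin{equation*}
\ell(m_{cw}) = \int_0^{2\pi}\max\!\bigl(H_{\mathbb{AB}}(\theta),\, s\cdot\mathds{1}_{S^1-[\theta_0,\theta_1]}(\theta)\bigr)\diff\theta - |\overline{A_0A_1}| - |\overline{B_0B_1}|.
\end{equation*}
The analogous application of Lemma~\ref{lem:support} to the trace $m'$ gives the counter-clockwise expression with indicator on $[\theta_0,\theta_1]$.

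Next I would subtract. Using the identity $\max(H_{\mathbb{AB}},\, s\cdot\mathds{1}_I) - H_{\mathbb{AB}} = (s - H_{\mathbb{AB}})^+ \cdot \mathds{1}_I$, the difference collapses to
\begin{equation*}
\ell(m') - \ell(m_{cw}) = \int_{[\theta_0,\theta_1]}(s - H_{\mathbb{AB}}(\theta))^{+}\diff\theta - \int_{S^1-[\theta_0,\theta_1]}(s - H_{\mathbb{AB}}(\theta))^{+}\diff\theta.
\end{equation*}
The specific geometry of Figure~\ref{fig:blocked2} -- the $A_i$'s lie strictly inside $s\text{-corr}(B_0,B_1)$ and are \emph{swapped} relative to the $B_i$'s, while $|\overline{B_0B_1}|$ is small -- forces $H_{\mathbb{AB}}(\theta) < s$ on a large sub-arc of $[\theta_0,\theta_1]$ (this is precisely the obstruction responsible for infeasibility of $m'$) and $H_{\mathbb{AB}}(\theta) \ge s$ on almost all of $S^1 - [\theta_0,\theta_1]$, so the first integral dominates the second and $\ell(m_{cw}) \le \ell(m')$.

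The main obstacle is the comparison in the last step: giving a clean, calculation-free proof that $H_{\mathbb{AB}}$ falls below $s$ on more of the counter-clockwise arc than on the clockwise arc. The natural reduction is to observe that $H_{\mathbb{AB}}(\theta) \ge s$ exactly at those $\theta$ for which the line normal to direction $\theta$ separates $\mathbb A$ from $\mathbb B$ in one of the two endpoint placements; these ``separating directions'' form two arcs of $S^1$ centred on $\tfrac{\pi}{2}+\angle(A_0,B_0)$ and $\tfrac{\pi}{2}+\angle(A_1,B_1)$, of half-width determined by the ratios $s/|A_iB_i|$. The swap of $A_0,A_1$ in Figure~\ref{fig:blocked2} relative to Section~\ref{sec:special} places both of these separating arcs on the clockwise side of the placement-angle interval, and once this is made precise the desired inequality reduces to a direct trigonometric check from the figure's coordinates.
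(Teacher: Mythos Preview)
Your plan is quite different from the paper's, and the step you identify as the ``main obstacle'' is exactly where it falls short. The paper does \emph{not} attempt to build a clockwise-optimal motion and compare the two lower-bound integrals. Instead it reflects the infeasible pivot $A_{\text{int}}$ across $\overline{B_0B_1}$ to obtain $A_{\text{int}}'$, and defines a feasible clockwise motion $m$ in which $\mathbb B$ travels from $B_0$ to $B_1$ around the \emph{top} of $s\text{-circ}(A_{\text{int}}')$ while $\mathbb A$ simply moves along the segment $\overline{A_0A_1}$. By reflection symmetry the $\mathbb B$-path in $m$ has exactly the same length as the $\mathbb B$-path in $m'$, and the $\mathbb A$-path in $m$ is the straight segment $\overline{A_0A_1}$, strictly shorter than the broken line $A_0\to A_{\text{int}}\to A_1$ used in $m'$. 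Hence $\ell(m)<\ell(m')$ with no support-function analysis and no integral inequality at all. Note that the paper never claims this $m$ is clockwise-optimal; beating $m'$ is all that is needed.

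Your route, by contrast, requires proving $\int_{[\theta_0,\theta_1]}(s-H_{\mathbb{AB}})^{+}\ge\int_{S^1\setminus[\theta_0,\theta_1]}(s-H_{\mathbb{AB}})^{+}$, and the heuristic you offer is not correct: the arcs on which $\langle A_i-B_i,(\cos\theta,\sin\theta)\rangle\ge s$ are centred at the placement angles $\theta_i$ themselves (not at $\tfrac{\pi}{2}+\theta_i$), so they straddle the endpoints of $[\theta_0,\theta_1]$ rather than lying on one side; and for the downward direction $\theta=3\pi/2\in S^1\setminus[\theta_0,\theta_1]$ one computes $H_{\mathbb{AB}}(3\pi/2)=-\min_i y_{A_i}+0<0<s$, contrary to your claim that $H_{\mathbb{AB}}\ge s$ on almost all of the clockwise arc. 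There is also a construction issue: with the $A_i$'s swapped, the tangents that make Lemma~\ref{lem:support} apply in the clockwise picture are the cross-tangents $A_i\to s\text{-circ}(B_{1-i})$ (as the paper remarks immediately after the claim), not the same-index tangents you use for $\widetilde A_{\text{int}}$.
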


\begin{proof}
Consider the trace shown in Figure~\ref{fig:blocked3}, where $A_{\text{int}}^\prime$ is the point $A_\text{int}$ reflected vertically across the segment $\overline{B_0B_1}$. The following motion $m$ is a feasible realization of this trace:

\begin{enumerate}
\item Move $\mathbb A$ from $A_0$ to the point $A^\prime$ vertically below $A_\text{int}^\prime$, on the along the segment $\overline{A_0A_1}$;
\item Move $\mathbb B$ from $B_0$ to $B_1$, rotating across the top of $s\text{-circ}(A_\text{int}^\prime)$; then
\item Move $\mathbb A$ to $A_1$.
\end{enumerate}

\begin{figure}[h]
\centering
	\includegraphics[width=0.7\textwidth]{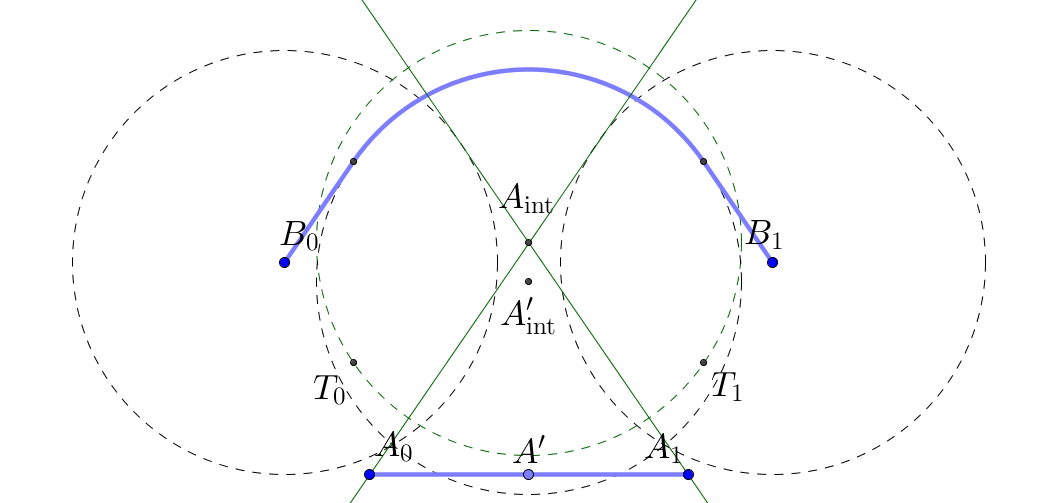}
\caption{\label{fig:blocked3}}
\end{figure}

It is easy to see that $\ell(m) < \ell(m^\prime)$: the total distance traveled by $\mathbb B$ is the same in $m$ and $m^\prime$, whereas the total distance traveled by $\mathbb A$ is strictly less in $m$. Since $m^\prime$ was a lower bound for all counter-clockwise optimal motions, this implies that any clockwise optimal motion would be shorter than a counter-clockwise one. Thus we may restrict our attention to clockwise optimal motions only.
\end{proof}

The intermediate point, $A^\prime$ was not strictly necessary here as we could have also moved $A_0$ straight to $A_1$ on the first step. In constructing the lower bounds below however, we will make use of a judiciously chosen intermediate point.

In any case we shall encounter, the clockwise optimal motion is similar to counter-clockwise motions we've already considered. For this case, the optimal clockwise motion looks like a vertically reflected version of Figure~\ref{fig:special}. The intermediate pivot point $A_\text{int}$ is formed by using the intersection of lower tangents from the $A_i$'s to the $s\text{-circ}(B_j)$'s, where $i \neq j$. In general, we have the following lemma:

\begin{lemma}
\label{lem:optimal-clockwise}
Suppose $A_0, A_1 \in s\text{-corr}(B_0, B_1)$ and let $H_{ij}$ denote the half space below the upper tangent from $A_i$ to $s\text{-circ}(B_j)$. If $H_{ij}$ intersects $s\text{-circ}(B_i)$ for some $i\in\{0,1\}$, $j=1-i$, and $A_j \in H_{ij}$, then the optimal motion must be net clockwise.
\end{lemma}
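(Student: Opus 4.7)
My plan is to generalize the argument of Claim~\ref{claim:special-clockwise}. Without loss of generality, assume the hypothesis holds with $i=0$, $j=1$, so the upper tangent from $A_0$ to $s\text{-circ}(B_1)$ dips below some point of $s\text{-circ}(B_0)$ and $A_1$ lies below this tangent. By our normalization, $B_0,B_1$ lie on the $x$-axis, and in the net counter-clockwise setting the $A_i$'s lie (weakly) above it.

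First I would construct the intermediate pivot. Let $t_1$ be the upper tangent from $A_0$ to $s\text{-circ}(B_1)$ and $t_0$ the upper tangent from $A_1$ to $s\text{-circ}(B_0)$, and let $A_{\text{int}} = t_0 \cap t_1$. Using the hypothesis $A_1 \in H_{01}$ (together with the fact that $A_0,A_1 \in s\text{-corr}(B_0,B_1)$, which forces both tangent lines to have opposite slopes and intersect above $\overline{B_0B_1}$), I would verify that $A_{\text{int}}$ is well defined, lies strictly above the chord $\overline{A_0A_1}$, and dominates both $A_0$ and $A_1$ in the sense of Definition~\ref{def:dominating}. The symmetric condition $A_0 \in H_{10}$ is automatic since $A_0$ lies on $t_1$, which is one of the boundary rays of the region dominated by $A_{\text{int}}$.

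Next I would set up the infeasible pseudo-motion $m'$ exactly as in Claim~\ref{claim:special}: move $\mathbb{A}$ from $A_0$ along $t_1$ to $A_{\text{int}}$, pivot $\mathbb{B}$ from $B_0$ to $B_1$ along the underside of $s\text{-circ}(A_{\text{int}})$, then move $\mathbb{A}$ along $t_0$ to $A_1$. By construction the tangents of $\mathbb{A}$'s path at $A_{\text{int}}$ are parallel to the $\mathbb{B}$-tangents $b_1,b_0$ at the pivot endpoints, so Lemma~\ref{lem:support} applies and yields $h_{\mathbb{AB}}(\theta)=s$ throughout the rotation range; outside this range $h_{\mathbb{AB}}$ is determined by the initial and final placements. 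Hence $\ell(m')$ matches the point-wise lower bound of Observation~\ref{obs:point-wise}, so $\ell(m') \leq \ell(m^\star)$ for every feasible net counter-clockwise motion $m^\star$. However, the first leg of $m'$ is exactly the segment of $t_1$ from $A_0$ to $A_{\text{int}}$, which by the hypothesis $H_{01}\cap s\text{-circ}(B_0) \neq \emptyset$ enters $s\text{-circ}(B_0)$, so $m'$ itself is not feasible.

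Finally I would produce a feasible net clockwise motion $m$ shorter than $m'$, mirroring Figure~\ref{fig:blocked3}. Let $A'_{\text{int}}$ be the reflection of $A_{\text{int}}$ across the $x$-axis, and let $A'$ be the point of $\overline{A_0A_1}$ on the vertical line through $A'_{\text{int}}$. Define $m$ by moving $\mathbb{A}$ along $\overline{A_0A_1}$ (pausing at $A'$), while in between pivoting $\mathbb{B}$ over the top of $s\text{-circ}(A'_{\text{int}})$ from $B_0$ to $B_1$. Feasibility is immediate since $A'_{\text{int}}$ lies below the $x$-axis and $A'$ lies on $\overline{A_0A_1}$, which lies outside $s\text{-circ}(B_0)\cup s\text{-circ}(B_1)$. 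The reflection is an isometry on $\mathbb{B}$'s trajectory, so $\ell(\xi_{\mathbb{B}}^{(m)}) = \ell(\xi_{\mathbb{B}}^{(m')})$, while $\ell(\xi_{\mathbb{A}}^{(m)}) = |\overline{A_0A_1}| < |\overline{A_0A_{\text{int}}}| + |\overline{A_{\text{int}}A_1}| = \ell(\xi_{\mathbb{A}}^{(m')})$ by the triangle inequality, strict because $A_{\text{int}}$ lies strictly off the line through $A_0,A_1$ (Step~1). Combining, every feasible net counter-clockwise motion has length at least $\ell(m') > \ell(m)$, so the optimal motion is net clockwise.

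The main obstacle I anticipate is the geometric bookkeeping in the first step: showing that the single-sided hypothesis on $(i,j)$ already forces $A_{\text{int}}$ to dominate both $A_0$ and $A_1$ and to lie strictly above $\overline{A_0A_1}$, together with ruling out degenerate configurations (e.g.\ $A_{\text{int}}=A_0$ or $A_1$) that would collapse the strict length inequality. The rest is a direct translation of the argument in Claim~\ref{claim:special-clockwise}.
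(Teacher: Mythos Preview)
Your outline follows the paper's strategy---lower-bound every net counter-clockwise motion by an infeasible pseudo-motion pivoting at some $A_{\text{int}}$, then exhibit a strictly shorter feasible net clockwise motion---but several subcases are missing and the feasibility step does not go through as written.

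First, the intersection $t_0\cap t_1$ need not lie in $s\text{-corr}(B_0,B_1)$, and the tangent \emph{segments} need not meet at all; your assertion that the two tangent lines ``have opposite slopes'' is not justified by the hypotheses. The paper handles this explicitly: when $\overline{A_0U_0}$ and $\overline{A_1U_1}$ fail to intersect, $A_{\text{int}}$ is redefined as the right-most point of $\overline{A_0U_0}\cap s\text{-circ}(B_0)$, and the domination/Lemma~\ref{lem:support} argument must be redone for that choice.

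Second, you do not treat the case $s\text{-circ}(B_0)\cap s\text{-circ}(B_1)\neq\emptyset$. There the tangent-intersection pivot is unavailable, and the paper uses instead the upper intersection point $t$ of the two $B$-circles as the dominating pivot for the infeasible lower bound; this requires a separate construction and a separate reflection argument.

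Third, your feasibility claim for $m$ is not ``immediate''. Compatibility gives only $A_0\notin\operatorname{int}s\text{-circ}(B_0)$ and $A_1\notin\operatorname{int}s\text{-circ}(B_1)$; nothing forces $\overline{A_0A_1}$ to avoid the $B$-circles, and you never verify that $\mathbb{B}$'s arc over $s\text{-circ}(A'_{\text{int}})$ stays outside $s\text{-circ}(A')$. The paper's argument is different in both the reflection and the motion order: it reflects $A_{\text{int}}$ \emph{twice} (across $\overline{B_0B_1}$ and then across the perpendicular bisector of $\overline{B_0B_1}$), moves $\mathbb{B}$ first while $\mathbb{A}$ is still at $A_0$, and then uses a tangent-line argument to show $A_0$ cannot obstruct $\mathbb{B}$. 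The second reflection is exactly what makes that obstruction argument work in the asymmetric situation; a single vertical reflection suffices only in the symmetric special case of Claim~\ref{claim:special-clockwise}, where it coincides with the double reflection.
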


\begin{proof}

There are two major cases: (i) the case where $s\text{-circ}(B_0)$ does not intersect $s\text{-circ}(B_1)$ and (ii) the case where they do intersect. For both cases, we assume that $A_0$ is under the line connecting $B_0$ with $B_1$. The other cases are treated similarly with almost exactly the same proof. 

\paragraph{$s\text{-circ}(B_0)$ does not intersect $s\text{-circ}(B_1)$}
Let $U_0$ be the upper tangent point of $A_0$ to $s\text{-circ}(B_1)$. By our assumptions, $A_1$ lies below $\overline{A_0U_0}$, and $A_1 \in s\text{-corr}(B_0, B_1)$. Let $U_1$ be the upper tangent point of $A_1$ to $s\text{-circ}(B_0)$. We first deal with the case where the tangent segments $\overline{A_0U_0}$ and $\overline{A_1U_1}$ intersect at a point $A_{\text{int}}\in s\text{-corr}(B_0, B_1)$ (see Figure \ref{fig:optimal-clockwise0}).

\begin{figure}[ht]

\begin{center}
\includegraphics[width=\columnwidth]{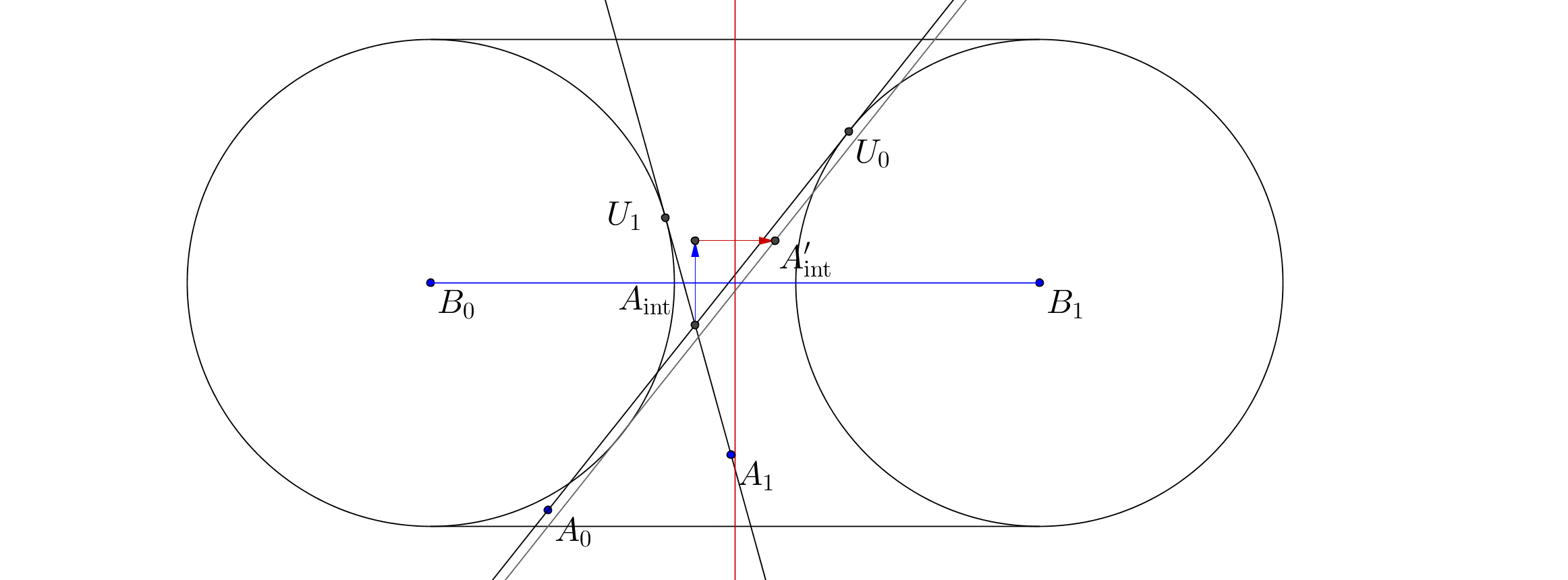}
\end{center}

\caption{ A case of Lemma \ref{lem:optimal-clockwise}. \label{fig:optimal-clockwise0}}
\end{figure} 

Consider the following ``motion'' $m^\prime = (\xi_{\mathbb A}^\prime, \xi_{\mathbb B}^\prime)$:
\begin{enumerate}
\item Move $\mathbb A$ on a straight line from $A_0$ to $A_{\text{int}}$.
\item Move $\mathbb B$ from $B_0$ to $B_1$ avoiding $s\text{-circ}(A_{\text{int}})$. This involves moving $\mathbb B$ to $T_{0}$ (the lower tangent point of $B_0$ and $s\text{-circ}(A_{\text{int}})$), rotating $\mathbb B$
counter-clockwise about $A_{\text{int}}$ to $T_{1}$ (the lower tangent point of $B_1$ and $s\text{-circ}(A_{\text{int}})$) in a range of angles $[\beta_{0},\beta_{1}]$, and then moving $\mathbb B$ from $T_{1}$ to $B_1$. 
\item Move $\mathbb A$ in a straight line from $A_{\text{int}}$ to $A_1$. 
\end{enumerate}

The ``motion'' outlined above is infeasible, as the position of $B_0$ prevents the movement from $A_0$ to $A_{\text{int}}$ in a straight-line. However, Lemma~\ref{lem:support} shows that $\ell(m^\prime)$ forms a lower bound on all possible net clockwise motions.

Now we construct a net clockwise motion whose length is no greater than that of $m^\prime$. Construct the point $A_{\text{int}}^{\prime}$ in Figure \ref{fig:optimal-clockwise0}, which is the result of two reflections of $A_{\text{int}}$, first along the line from $B_0$ to $B_1$ and then along the perpendicular bisector of $\overline{B_0B_1}$. Consider the following motion $m$:

\begin{figure}[ht]

\begin{center}
\includegraphics[width=\columnwidth]{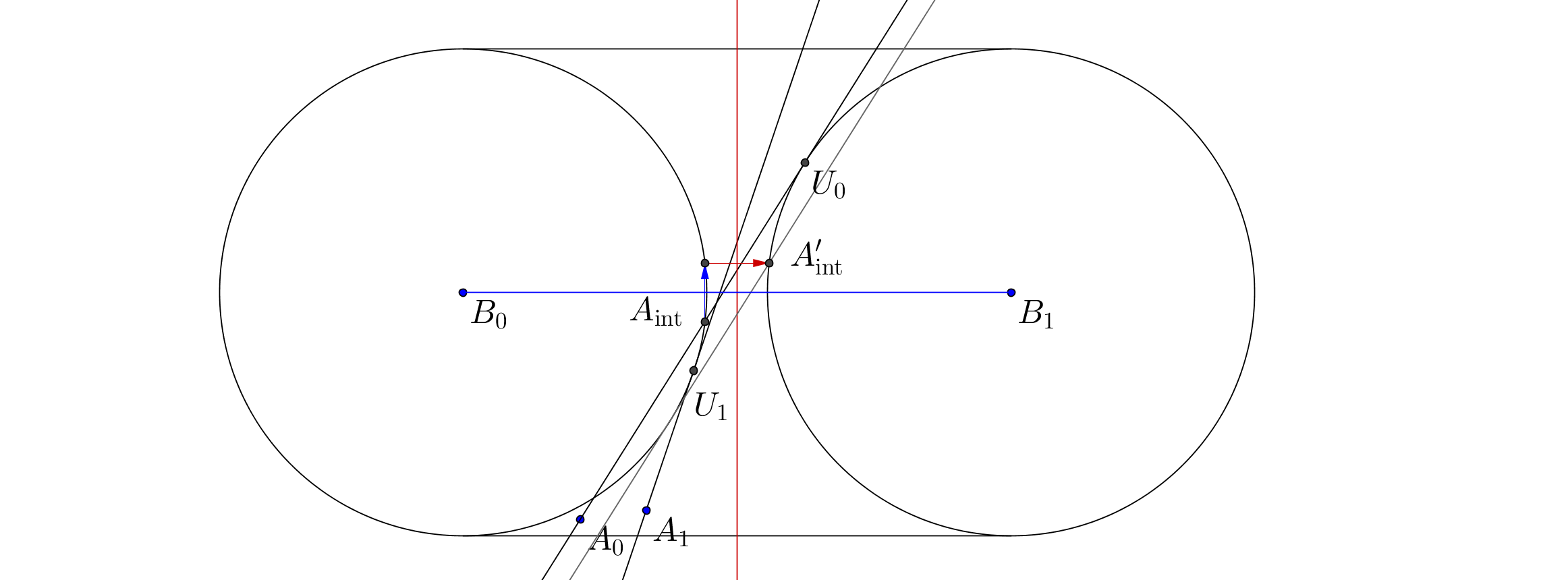}
\end{center}

\caption{ A case of Lemma \ref{lem:optimal-clockwise}. \label{fig:optimal-clockwise1}}
\end{figure} 

\begin{enumerate}
\item Move $\mathbb B$ from $B_0$ to $B_1$ avoiding $s\text{-circ}(A_{\text{int}}^{\prime})$ by rotating over the top of it.
\item Move $A_0$ to $A_1$ in a straight line.
\end{enumerate}

Clearly step 1 of $m$ is the same length as step 2 of $m^\prime$, and step 2 of $m$ is at most the length of steps 1 and 3 of $m^\prime$, so $\ell(m)\leq \ell(m^\prime)$. Furthermore $m$ is a feasible motion. To see this, let $t$ be line through $A_{\text{int}}^{\prime}$ parallel to the segment $\overline{A_0A_{\text{int}}}$, and let $q$ be the tangent point between $s\text{-circ}(B_0)$ and $t$. Note that $A_{\text{int}}^{\prime}$ lies on the right of $q$ above $t$ and $A_0$ is left of $q$ and above $t$, so $A_0$ does not obstruct the movement of $\mathbb B$ in step 1. 

Hence the optimal motion must be net clockwise in the case where $\overline{A_0U_0}$ and $\overline{A_1U_1}$ intersect.

When $\overline{A_0U_0}$ and $\overline{A_1U_1}$ do not intersect (see Figure \ref{fig:optimal-clockwise1}), this means that $U_1$ is below $\overline{A_0U_0}$. In this case, let $A_{\text{int}}$ be the right-most intersection point between $\overline{A_0U_0}$ and $s\text{-circ}(B_0)$ and the proof above will work without modification.

\paragraph{$s\text{-circ}(B_0)$ intersects $s\text{-circ}(B_1)$}
We now deal with case (ii), where $s\text{-circ}(B_0)$ intersects $s\text{-circ}(B_1)$ (see Figure~\ref{fig:optimal-clockwise2}). Let $\mathcal{L}$ (resp. $\mathcal{U}$) denote the region within $s\text{-corr}(B_0, B_1)$ below (resp. above) the discs enclosed by $s\text{-circ}(B_0)$ and $s\text{-circ}(B_1)$. We will show that if both $A_0$ and $A_1$ are in $\mathcal{L}$, then the optimal motion must be net clockwise. The case for $\mathcal{U}$ can be handled similarly.

As before, we will first lower bound the optimal net counter-clockwise motion by an infeasible motion, and then show a net clockwise motion that is at most the length of the lower bound. 

Let $t$ be the upper intersection point of $s\text{-circ}(B_0)$ and $s\text{-circ}(B_1)$. 
If $A_0$ is left of the perpendicular bisector of $\overline{B_0B_1}$, then define the following: $U_0$ is the upper tangent point of $A_0$ to $s\text{-circ}(B_1)$, $U_1$ is the upper tangent point of $A_1$ to $s\text{-circ}(B_0)$.  If $A_1$ is right of the perpendicular bisector, let $U_0$ be the upper tangent point of $A_0$ to $s\text{-circ}(B_0)$, and let $U_1$ be the upper tangent point of $A_1$ to $s\text{-circ}(B_1)$.

If $U_1$ is counter-clockwise of $t$ on $s\text{-circ}(B_0)$ or $U_0$ is clockwise of $t$ on $s\text{-circ}(B_1)$, one can check that the proof of the non-intersecting case works here as well. Otherwise, both $U_0$ and $V_1$ are vertically below $t$.

In this case, consider the following ``motion'' $m^\prime$:
\begin{enumerate}
\item Move $\mathbb A$ on a straight line from $A_0$ to $t$. This involves possibly moving on a chord through $s\text{-circ}(B_0)$ and $s\text{-circ}(B_1)$ in a range of angles $[\alpha_0, \alpha_1]$.
\item Move $\mathbb B$ from $B_0$ to $B_1$ avoiding $s\text{-circ}(t)$.
\item Move $\mathbb A$ in a straight line from $t$ to $A_1$. This involves possibly moving on a chord through $s\text{-circ}(B_0)$ and $s\text{-circ}(B_1)$ in a range of angles $[\alpha_2, \alpha_3]$.
\end{enumerate}

As in the previous case, Lemma~\ref{lem:support} (with $t$ as the dominating point) shows that $\ell(m^\prime)$ forms a lower bound on all net counter-clockwise motions.



Now we construct a net clockwise motion whose length is no greater than that of $m^\prime$. Construct the point $A_\text{int}^\prime$, which is the vertical reflection of $t$ across $\overline{B_0B_1}$. Now consider the same type of motion $m$ that we used in the non-intersecting case:

\begin{enumerate}
\item Move $\mathbb B$ from $B_0$ to $B_1$ avoiding $s\text{-circ}(A_{\text{int}}^{\prime})$ by rotating over the top of it.
\item Move $A_0$ to $A_1$ in a straight line.
\end{enumerate}

\begin{figure}[ht]

\begin{center}
\includegraphics[width=0.8\columnwidth]{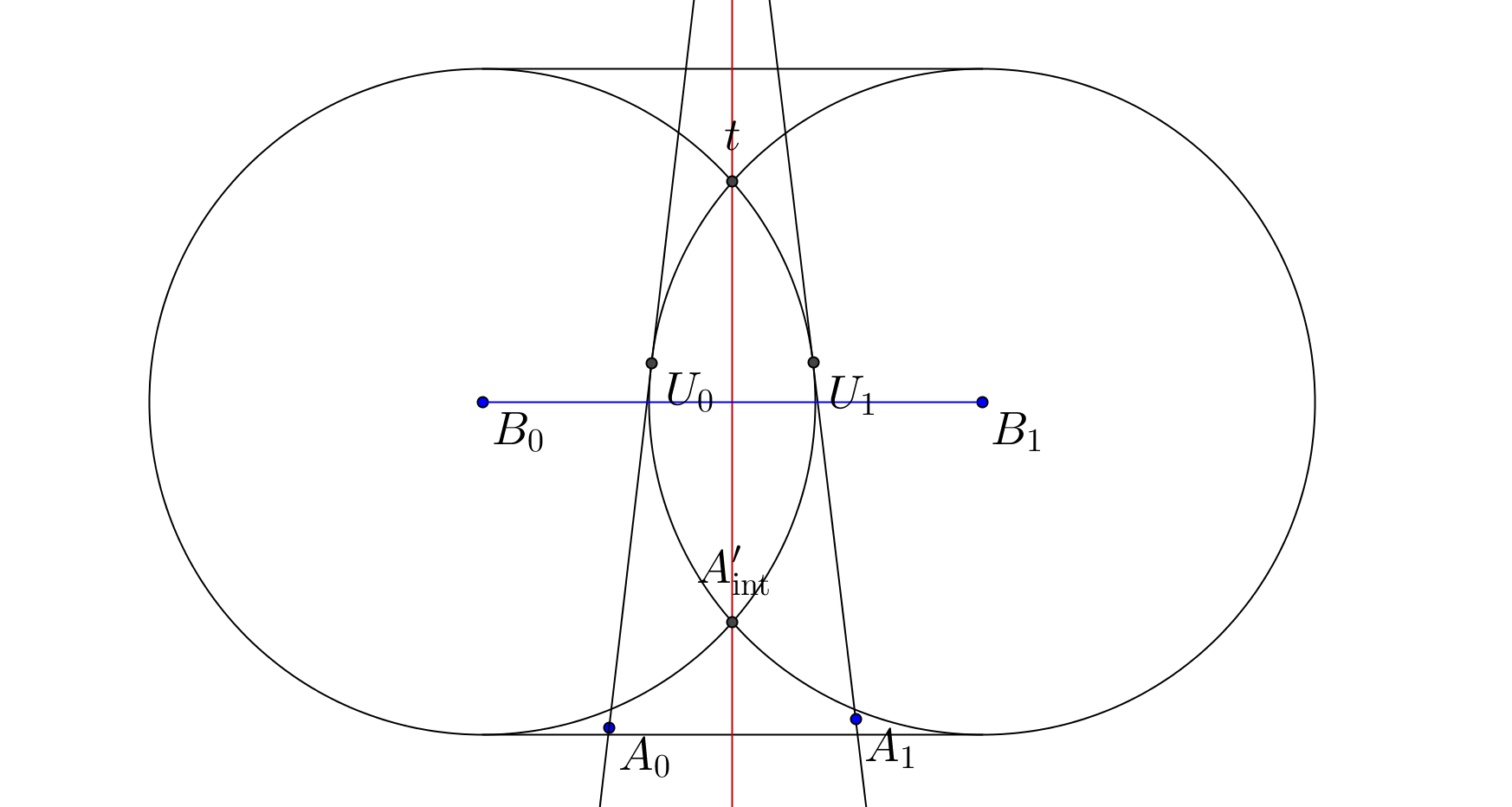}
\end{center}

\caption{ A case of Lemma \ref{lem:optimal-clockwise}. \label{fig:optimal-clockwise2}}
\end{figure} 

Clearly $m$ is a feasible motion. As before, step 1 of $m$ is the same length as step 2 of $m^\prime$, and step 2 of $m$ is at most the length of steps 1 and 3 of $m^\prime$, so $\ell(m)\leq \ell(m^\prime)$.
\end{proof}

\section{Case analysis of counter-clockwise optimal motions}
\label{sec:mainprf2}
In this section we treat exhaustively each case of Table~\ref{straight-line}, beginning with Case 1. For Case 2 and onwards, the general form of the motion we construct will be similar to examples presented in Section~\ref{sec:mainprf}. That is, the motion will be decoupled, consisting of at most two $\mathbb A$ motions which meet at an intermediate point $A_\text{int}$ and one $\mathbb B$ motion. The motions themselves are constructed from tangent segments and arcs of radius $s$ circles. When an arc of a circle is part of a motion, the centre of the circle will be dominating in the sense of Definition~\ref{def:dominating}.

\subsection{Case 1}
It suffices to treat Case 1a, as Case 1b reduces to Case 1a by symmetry. 
 In Case 1a, $A_0 \not\in s\text{-corr}(B_0, B_1)$, so on the first step we translate $\mathbb B$ from $B_0$ to $B_1$ in a straight line without touching $\mathbb A$. At this point $\mathbb A$ can move freely in a straight line from $A_0$ to $A_1$, as $B_1 \not\in s\text{-corr}(A_0, A_1)$. As we shall see through examining the other cases, Case 1 is the only situation where a straight-line motion is possible.

\subsection{Case 2}
\label{sec:newcase2}
It suffices to treat Case 2a since Case 2b reduces to 2a by symmetry; 
thus we assume that $A_0 \in s\text{-corr}(B_0, B_1)$ and $B_0 \in s\text{-corr}(A_0, A_1)$. In fact, we can relax this and assume that $A_1 \in s\text{-cone}(A_0, B_0)$. This amounts to including the ``wedge'' between $A_0$ and $s\text{-circ}(B_0)$.

\begin{figure}[ht]

\begin{subfigure}{\textwidth}
  \begin{center}
  \includegraphics[width=\textwidth]{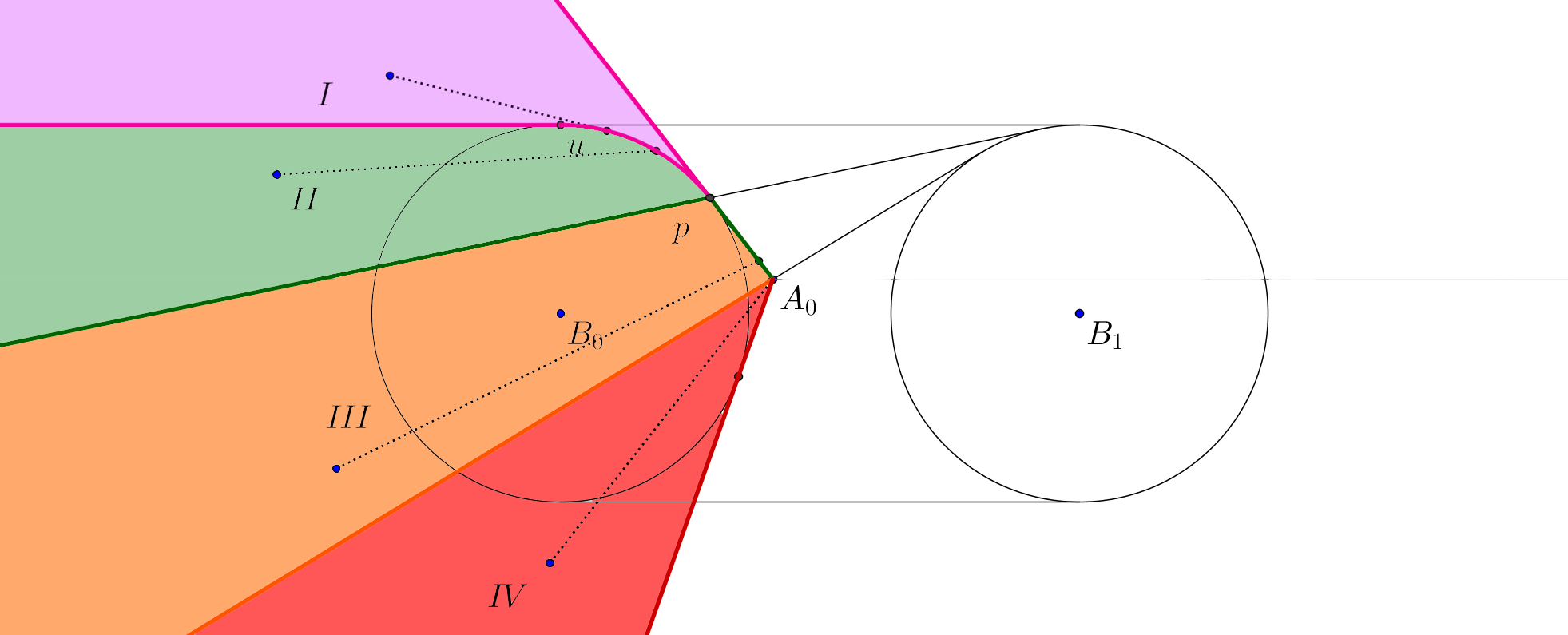}
  \end{center}
  \caption{\label{fig:case2a-separated}}
\end{subfigure}


\begin{subfigure}{\textwidth}
  \begin{center}
  \includegraphics[width=\textwidth]{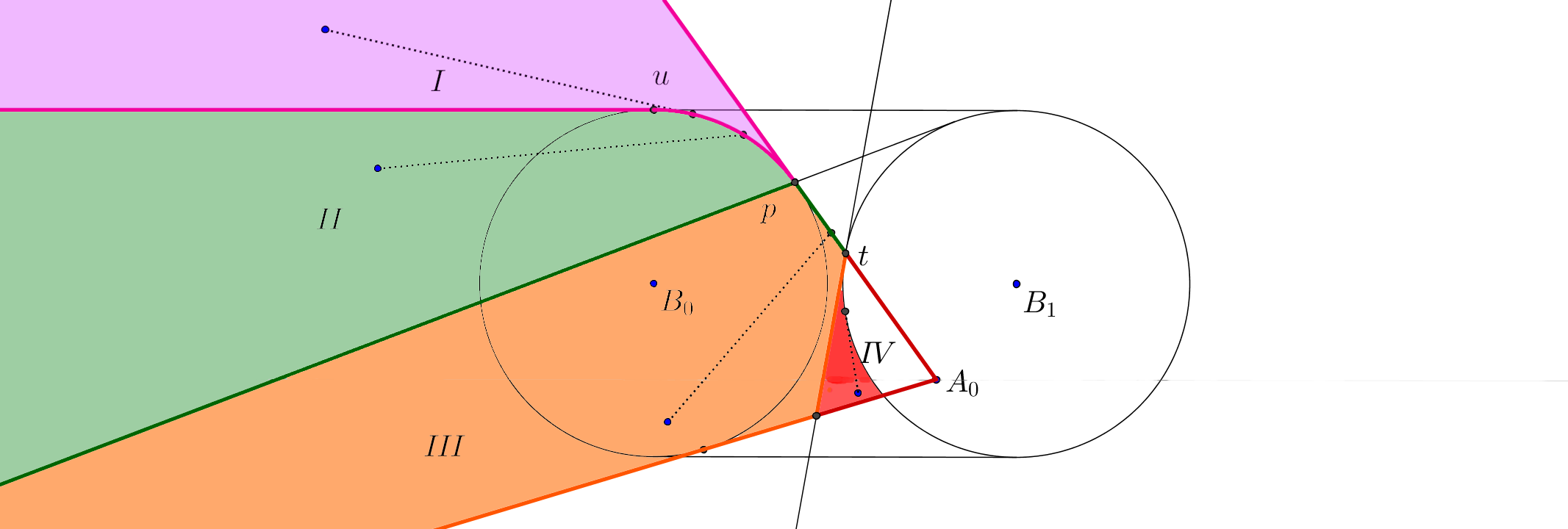}
  \end{center}
  \caption{\label{fig:case2a-separated-a0-in-b1}}
\end{subfigure}

\caption{The different zones of Case 2 when $s\text{-circ}(B_0)$ and $s\text{-circ}(B_1)$ do not intersect. We have different optimal motions (dotted lines) depending on the zone in which $A_1$ lies. \label{case2a}}

\end{figure}
The motion we take in Case 2a depends on the zone in which $A_1$ lies (cf. Figure \ref{case2a} and \ref{case2a-intersect}). Each zone represents a locus of locations for $A_0$ which give rise to a specific sequence of motions that are counter-clockwise optimal within that zone.

Let $p$ be the upper tangent point from $A_0$ to $s\text{-circ}(B_0)$. The zones are defined by the following properties:

\begin{itemize}
\itemsep0em
\item [Zone I:] The set of points $q \in s\text{-cone}(A_0, B_0)$ for which some tangent point from $q$ to $s\text{-circ}(B_0)$ lies on the arc of $s\text{-circ}(B_0)$ from $p$ to $u$.
\item [Zone II:] The set of points $q \in s\text{-cone}(A_0, B_0)$ where the tangent from $q$ to $s\text{-circ}(B_1)$ intersects the arc of $s\text{-circ}(B_0)$ from $p$ to $u$.
\item [Zone III:] The set of points $q \in s\text{-cone}(A_0, B_0)$ where the tangent from $q$ to $s\text{-circ}(B_1)$ intersects $\overline{A_0p}$.
\item [Zone IV:] The set of points $q \in s\text{-cone}(A_0, B_0)$ that are dominated by $t$. $t$ is $A_0$ if $A_0\not\in s\text{-circ}(B_1)$, is the intersection point of $\overline{A_0p}$ and $s\text{-circ}(B_1)$ if $A_0 \in s\text{-circ}(B_1)$, and is the upper intersection point of $s\text{-circ}(B_0)$ and $s\text{-circ}(B_1)$ if the intersection point of the circles lie on the arc from $p$ to $u$.
\end{itemize}

For concreteness, we also give constructive definitions in each subcase below.


\subsubsection{Subcase 1: $s\text{-circ}(B_0)$ and $s\text{-circ}(B_1)$ do not intersect}

We first discuss the constructions of zones I-IV in Figures \ref{fig:case2a-separated} and \ref{fig:case2a-separated-a0-in-b1}. We may construct zones I-IV explicitly through the following tangents and curves: 

\begin{enumerate}
\item The horizontal tangent through the uppermost point $u$ of $s\text{-circ}(B_0)$. This tangent and the arc of $s\text{-circ}(B_0)$ between $u$ and $p$ (where $p$ is the upper tangent point between $A_0$ and $s\text{-circ}(B_0)$) separates zone I from zone II.
\item The tangent through $p$ to $s\text{-circ}(B_1)$. This tangent separates zone II from zone III.
\item If $A_0\not\in s\text{-circ}(B_1)$, the tangent line from $A_0$ to $s\text{-circ}(B_1)$ (cf. Figure \ref{fig:case2a-separated}). Otherwise, the tangent of $s\text{-circ}(B_1)$ through $t$, where $t$ is the intersection point of $\overline{A_0p}$ and $s\text{-circ}(B_1)$ (cf. Figure \ref{fig:case2a-separated-a0-in-b1}). This tangent separates zone III from zone IV.
\end{enumerate}

Note that zone III and IV may be empty, if the position of $A_0$ lies below the line tangent to the bottom of $\text{circ}_s(B_0)$ and the top of $\text{circ}_s(B_1)$. 

For each zone we specify the location of the intermediate point $A_{\text{int}}$ as follows: 

\begin{itemize}
\itemsep0em
\item [Zone I:] $A_{\text{int}}$ is the point $A_1$.

\item [Zone II:] $A_{\text{int}}$ is the rightmost point of intersection between the tangent from $A_1$ to $\text{circ}_s(B_1)$ and $\text{circ}_s(B_0)$.

\item [Zone III:] $A_{\text{int}}$ the point of intersection of the tangent from $A_1$ to $\text{circ}_s(B_1)$ and the tangent from $A_0$ to
$\text{circ}_s(B_0)$. 

\item [Zone IV:] $A_{\text{int}}$ is the point $t$ (as defined above).
\end{itemize}

We define points $T_0$ and $T_1$ which are the lower points of tangency to $\text{circ}_s(A_{\text{int}})$ from $B_0$ and $B_1$ respectively.
Our three-step generic motion involves:
\begin{enumerate}
\item Moving $\mathbb A$ on the shortest path from $A_0$ to $A_{\text{int}}$, avoiding $\text{circ}_s(B_0)$. This may involve rotating $\mathbb A$ counter-clockwise
about $B_0$ in a range of angles $[\alpha_{0},\alpha_{1}]$. 
\item Moving $\mathbb B$ from $B_0$ to $B_1$ avoiding $\text{circ}_s(A_{\text{int}})$. This involves translating $\mathbb B$ from $B_0$ to $T_0$, rotating $\mathbb B$
counter-clockwise about $A_{\text{int}}$ from $T_0$ to $T_{1}$ in a range of angles $[\beta_{0},\beta_{1}]$,
and then translating $\mathbb B$ from $T_{1}$ to $B_1$.
\item Translating $\mathbb A$ from $A_{\text{int}}$ to $A_1$ (collision-free by the disjointness of $\text{cone}_s(A_0, B_0)$ and $\text{circ}_s(B_1)$). 
\end{enumerate}

From the descriptions above, one can see that there is some amount of symmetry between zone I and IV. For this reason, we first dispense with Zones II and III, and then handle Zone I and IV at the end of this section.

\subsubsection*{$A_1$ is in zone II}

If $A_1$ is in zone II, then the tangent from $A_1$ to $B_1$
must intersect $B_0$ in up to two points. Let $A_{\text{int}}$ be the rightmost
intersection point.

\begin{proof}
Since $B_0$ dominates $B_1$ with respect to $s\text{-corr}(A_0, A_1)$, we have by Lemma~\ref{lem:support} that $h_{\mathbb{AB}}(\theta)=s$ for $\theta\in[\alpha_0, \alpha_1]$. Similarly, since $A_\text{int}$ dominates $A_0$ and $A_1$ with respect to $s\text{-corr}(B_0, B_1)$, $h_{\mathbb{AB}}(\alpha)=s$ for $\theta\in[\beta_0, \beta_1]$ by Lemma~\ref{lem:support}.

For angles in $S^1-[\alpha_{0},\alpha_{1}]-[\beta_{0},\beta_{1}]$, one can check that $A_0$ or $A_1$ must be one support point, and either $B_0$ or $B_1$ must be the other.
\end{proof}

\subsubsection*{$A_1$ is in zone III}

\begin{proof}
By construction, $A_\text{int}$ dominates $A_0$ and $A_1$ with respect to $s\text{-corr}(B_0, B_1)$. Hence by Lemma~\ref{lem:support}, we have $h_{\mathbb{AB}}(\theta)=s$ for $\theta \in [\alpha_0, \alpha_1]$.
For angles in $S^1-[\alpha_{0},\alpha_{1}]$, one can see that either $A_0$ or $A_1$ must be one support point, and either $B_0$ or $B_1$ must be the other.
\end{proof}

\subsubsection*{$A_1$ is in zone I}
There are two cases for Zone I, the location of $A_1$ with respect to the upper tangent $A_0$ and $s\text{-circ}(B_1)$. Let $U$ be the upper tangent of $A_0$ and $s\text{-circ}(B_1)$.

\begin{figure}[ht]

\begin{center}
  \includegraphics[width=0.7\textwidth]{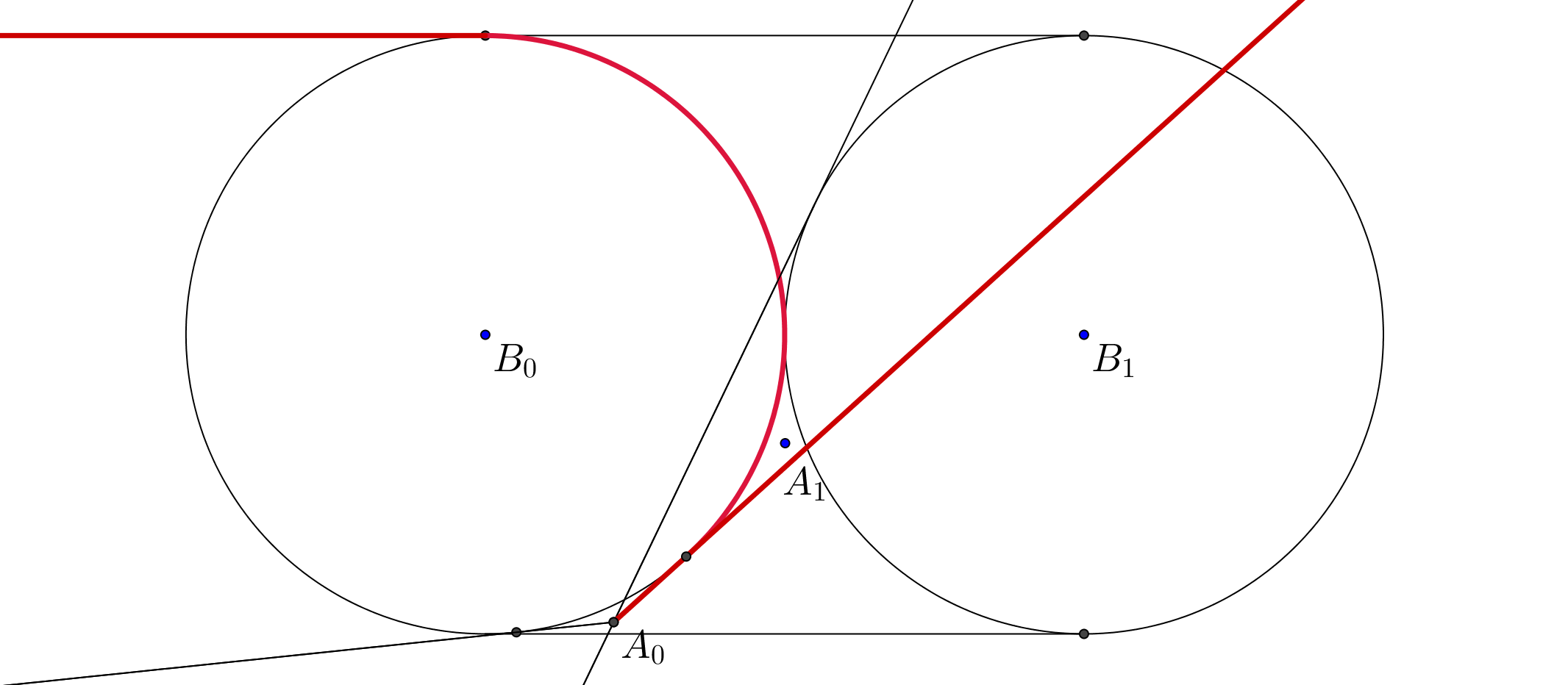}
  \end{center}
\caption{ Zone I, example of when $A_1$ is below $U$. Zone I is outlined by the bolded tangents. \label{fig:zone1subcase}}
\end{figure} 

\paragraph{$A_1$ is above $U$.}
\begin{proof}
In this case, $A_1$ either dominates $A_0$ or is outside of $s\text{-corr}(B_0, B_1)$ and so by Lemma~\ref{lem:support} choosing $A_1$ as $A_\text{int}$ shows that $h_{\mathbb{AB}}(\theta)=s$ for $\theta\in[\beta_0, \beta_1]$ (where $[\beta_0, \beta_1]=\emptyset$ for $A_1\not\in s\text{-corr}(B_0, B_1)$). Furthermore, $B_0$ dominates $B_1$ with respect to $s\text{-corr}(A_0, A_1)$, so Lemma~\ref{lem:support} again shows that $h_{\mathbb{AB}}(\theta)=s$ for $\theta\in[\beta_0, \beta_1]$. Since there are no intermediate pivot points except for the $A_i$'s and $B_i$'s, it's clear that for all other angles, $A_0$ or $A_1$ must be one support and $B_0$ or $B_1$ must be the other.
\end{proof}

\paragraph{$A_1$ is below $U$.}
In this case (see Figure~\ref{fig:zone1subcase}), the positions of $A_0$ and $A_1$ satisfy the conditions of Lemma~\ref{lem:optimal-clockwise}. Thus we may look for a clockwise motion. In the clockwise zones, $A_1$ is in Zone IV of $A_0$, which we handle below.

\subsubsection*{$A_1$ is in zone IV} 

Due to the complexity of Zone IV, we split it into three subcases. 

\paragraph{Zone IV, subcase 1.}

\begin{figure}[ht]

\begin{center}
  \includegraphics[width=\textwidth]{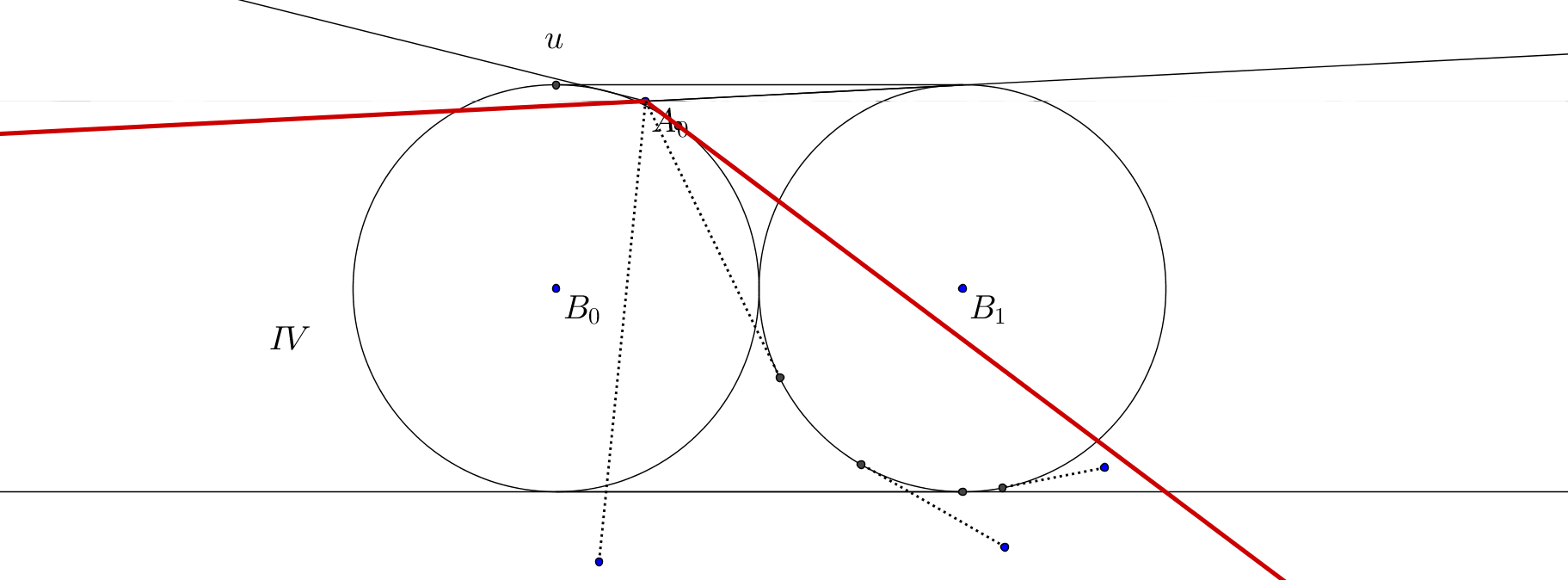}
  \end{center}
\caption{ Positions of $A_1$ in Zone IV, subcase 1 and subcase 2. Zone IV is outlined by the bolded tangents. \label{fig:zone4case1+2}}
\end{figure} 

We first handle the cases for which $A_0 \not\in s\text{-circ}(B_1)$ and the upper tangent point of $A_1$ and $s\text{-circ}(B_1)$ lies inside $s\text{-corr}(B_0,B_1)$. By these assumptions, we must have $A_1 \in s\text{-corr}(B_0, B_1)$ or below the lower horizontal tangent of $s\text{-circ}(B_0)$ and $s\text{-circ}(B_1)$ (see Figure \ref{fig:zone4case1+2}).

In this case, choosing $A_\text{int}$ to be $A_0$ in our three-step generic motion yields a net optimal counter-clockwise motion.
\begin{proof}
By construction of Zone IV, $A_0$ dominates $A_1$ with respect to $s\text{-corr}(B_0, B_1)$. Hence by Lemma~\ref{lem:support}, $h_{\mathbb{AB}}(\alpha)=s$ for $\alpha\in[\alpha_{0},\alpha_{1}]$. 

By our property that the upper tangent point of $A_1$ and $s\text{-circ}(B_1)$ lies inside $s\text{-corr}(B_0,B_1)$, we have that $B_1$ dominates $B_0$ with respect to $s\text{-corr}(A_0, A_1)$.

For angles in $S^1-[\alpha_{0},\alpha_{1}]-[\beta_{0},\beta_{1}]$, either $A_0$ or $A_1$ must be one support point, and either $B_0$ or $B_1$ must be the other. This is due to the fact that all pivot points in our motion are either the initial or final positions, and all non-pivots were either circular arcs or tangents.
\end{proof}

\paragraph{Zone IV, subcase 2.}
If subcase 1 does not apply and $A_0 \not\in s\text{-circ}(B_1)$, then $A_1$ must be right of the lower tangent between $A_0$ and $s\text{-circ}(B_1)$, and above or on the lower horizontal tangent of $s\text{-circ}(B_0)$ and $s\text{-circ}(B_1)$. See Figure \ref{fig:zone4case1+2} for an illustration of the possible positions of $A_1$ under our assumption.

In this case, Lemma~\ref{lem:optimal-clockwise} shows that there exists a net-clockwise motion that is at least as good as any net-counter-clockwise optimal motion, and that the net-clockwise optimal motion is simply the clockwise version of subcase 1 handled above. 

To apply Lemma~\ref{lem:optimal-clockwise}, we first rotate Figure \ref{fig:zone4case1+2} so that the $A_i$'s are on the $x$-axis (cf. Figure \ref{fig:zone4hard-rotated}). Next, let $V_0$ and $V_1$ be the upper tangent points of $B_0$ to $s\text{-circ}(A_1)$ and $B_1$ to $s\text{-circ}(A_0)$ respectively. As $A_1$ is above the lower horizontal tangent of $s\text{-circ}(B_0)$ and $s\text{-circ}(B_1)$, we must have $B_1$ under $\overline{B_0 V_1}$. Hence we may apply Lemma \ref{lem:optimal-clockwise} with the roles of $\mathbb A$ and $\mathbb B$ switched.

\begin{figure}[ht]

\begin{center}
\includegraphics[width=\textwidth]{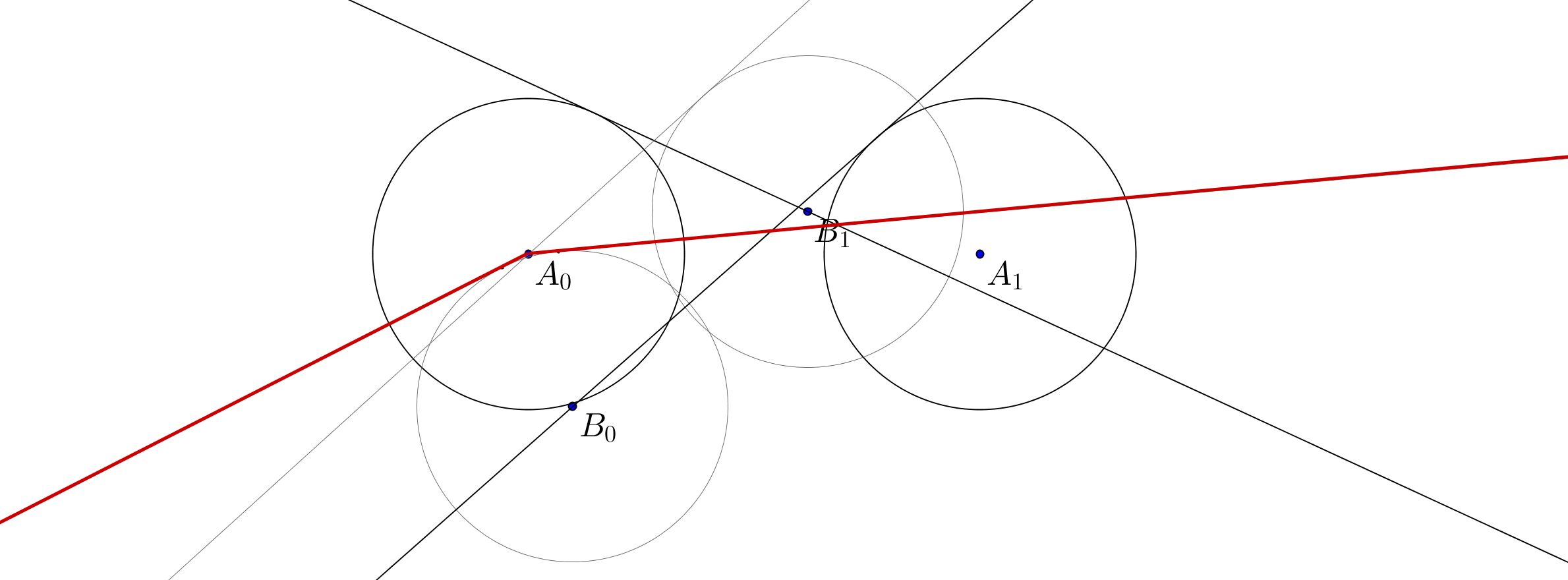}
\end{center}

\caption{ Zone IV subcase 2, rotated. \label{fig:zone4hard-rotated}}
\end{figure} 

The clockwise optimal motion is:
\begin{enumerate}
\item Move $\mathbb A$ from $A_0$ to $A_1$ rotating over the top of $s\text{-circ}(B_0)$. 
\item Move $\mathbb B$ in a straight line from $B_0$ to $B_1$. 
\end{enumerate}

This is exactly the motion in Zone IV, subcase 1, with the roles of $A$ and $B$ switched, so the optimality of this motion is already shown above. 

\paragraph{Zone IV, subcase 3.}

If subcase 1 and subcase 2 do not apply, then $A_0 \in s\text{-circ}(B_1)$ (cf. Figure \ref{fig:case2a-separated-a0-in-b1}).

In this case the optimal option is:
\begin{enumerate}
\item Move $\mathbb A$ on a straight line from $A_0$ to $t$.
\item Move $\mathbb B$ from $B_0$ to $B_1$ avoiding $s\text{-circ}(t)$. This involves moving $\mathbb B$ to $T_{0}$, rotating $\mathbb B$ counter-clockwise about $A_{\text{int}}$ to $T_{1}$ in a range of angles $[\alpha_{0},\alpha_{1}]$,
and then moving $\mathbb B$ from $T_{1}$ to $B_1$.
\item Move $\mathbb A$ on a shortest path from $t$ to $A_1$ while avoiding $s\text{-circ}(B_1)$. This involves rotating possibly rotating $\mathbb A$ in a range of angles $[\beta_0, \beta_1]$ around $s\text{-circ}(B_1)$.
\end{enumerate}

In this case, the motion is of the same type as the one given for Zone II and the exact same proof applies.


\subsubsection{Subcase 2: $s\text{-circ}(B_0)$ and $s\text{-circ}(B_1)$ intersects}

\begin{figure}[ht]

\begin{subfigure}[b]{\textwidth}
\begin{center}
  \includegraphics[width=\textwidth]{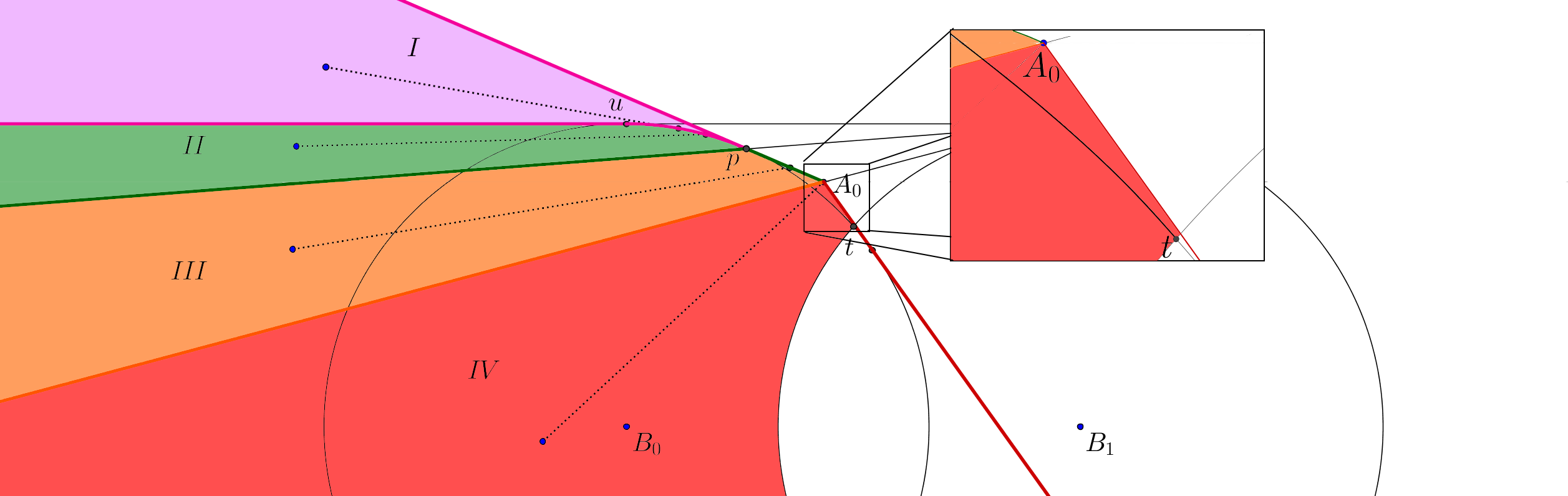}
  \end{center}
  \caption{ \label{case2a-intersect-upper}}
\end{subfigure}


\begin{subfigure}[b]{\textwidth}
  \begin{center}
  \includegraphics{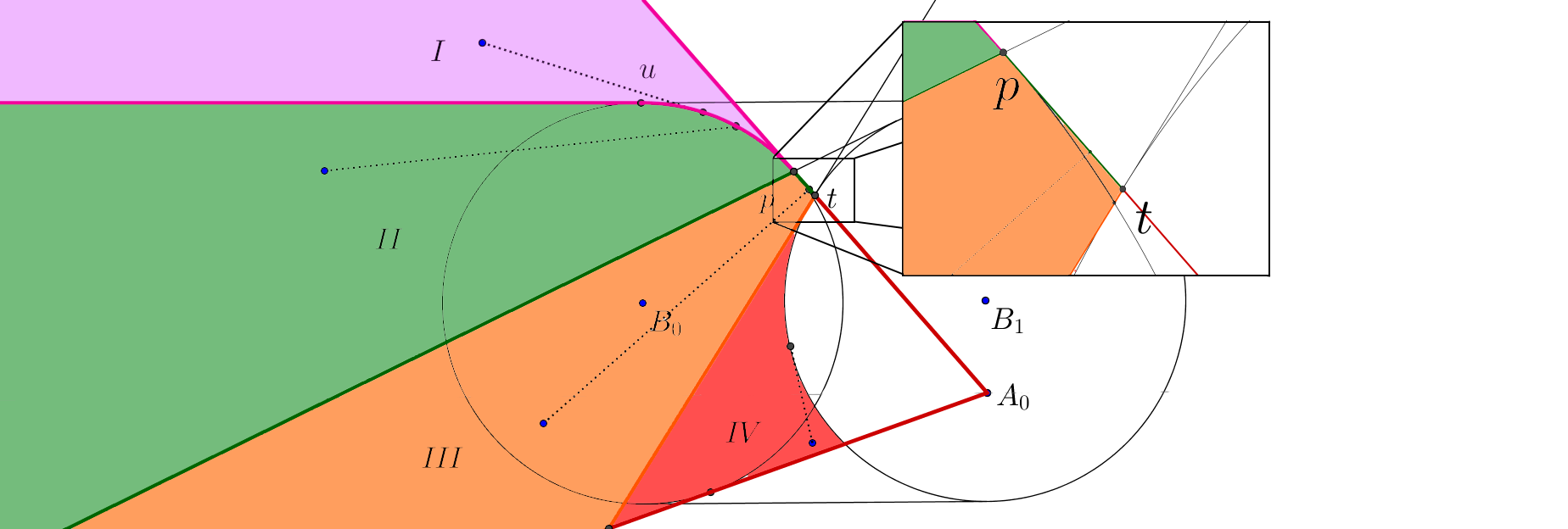}
  \end{center}
  \caption{}
  \label{case2a-intersect-lower1}
\end{subfigure}

\begin{subfigure}[b]{\textwidth}
  \begin{center}
  \includegraphics[width=0.95\textwidth]{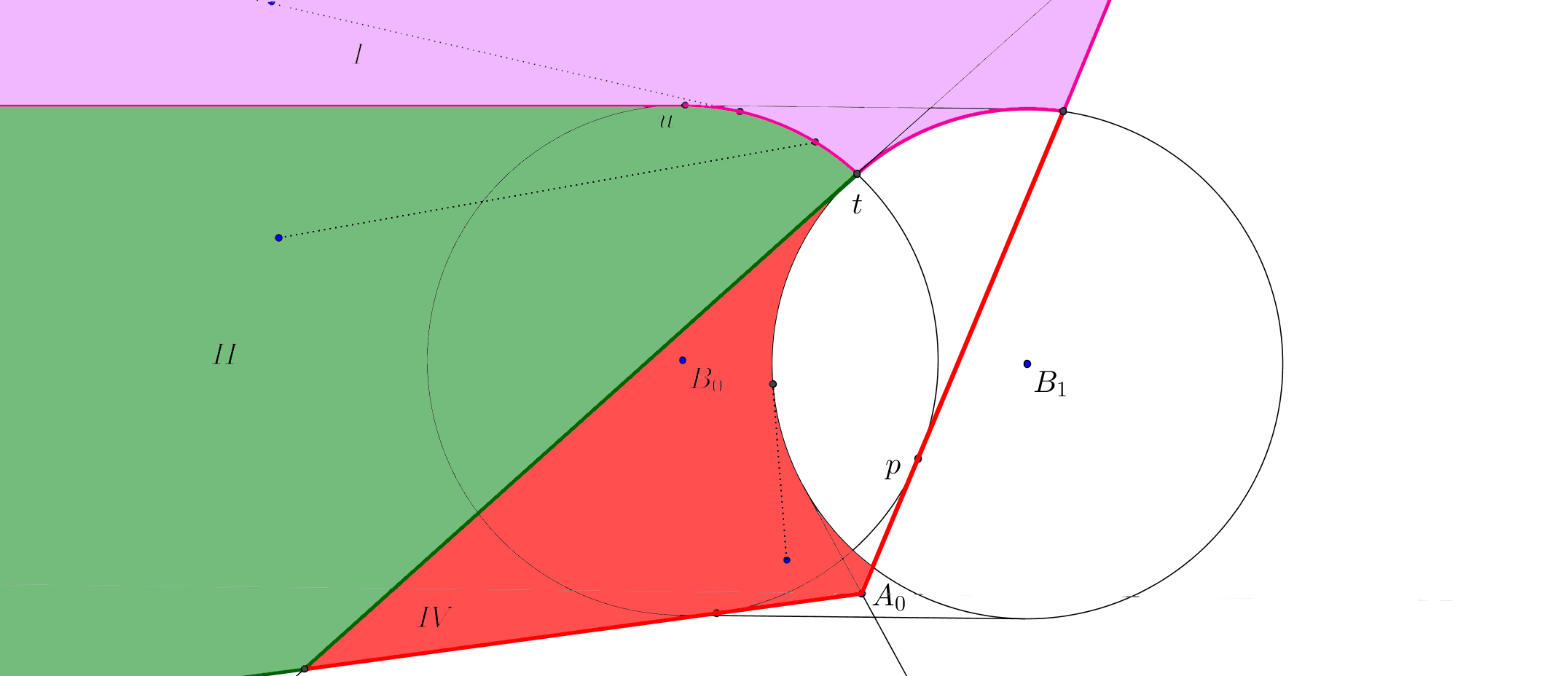}
  \end{center}
  \caption{}
  \label{case2a-intersect-lower2}
\end{subfigure}

\caption{The different zones of Case 2 when $s\text{-circ}(B_0)$ and $s\text{-circ}(B_1)$ intersect. We have different optimal motions (dotted lines) depending on the zone in which $A_1$ lies. \label{case2a-intersect}}

\end{figure}

When $s\text{-circ}(B_0)$ and $s\text{-circ}(B_1)$ intersect (cf. Figure \ref{case2a-intersect}), the zones are defined by the following curves:
 
\begin{enumerate}
\item The two tangents from $A_0$ to $s\text{-circ}(B_0)$.

\item The horizontal tangent from the top of $s\text{-circ}(B_0)$.

\item The tangent from $p$ to $s\text{-circ}(B_1)$ where $p$ is the
upper tangent point from $A_0$ to $s\text{-circ}(B_0)$. 

\item The tangent line from $t$ to $s\text{-circ}(B_1)$. Let $v$
be the intersection point of the line from $A_0$ to $p$ and $s\text{-circ}(B_1)$. If $A_0 \not\in s\text{-circ}(B_1)$, then $t$ is $A_0$. Otherwise, $A_0 \in s\text{-circ}(B_1)$ and $t$ is $v$ when $p$ lies outside of the 
$s\text{-circ}(B_1)$ (Figure \ref{case2a-intersect-lower1}), and $t$ is the upper intersection point between the $s\text{-circ}(B_i)$'s otherwise (Figure \ref{case2a-intersect-lower2}). 

\end{enumerate}

For the most part, the motions executed in Subcase 1 and Subcase 2 are the same, as are their intermediate points. However, for Zone I and IV there are small differences, as we shall see.

\subsubsection*{$A_1$ is in zone II or III}
In these zones, the motion is the same as the non-intersecting case.

\subsubsection*{$A_1$ is in zone I}
By Lemma~\ref{lem:optimal-clockwise}, if $A_1$ is located in any portion of Zone I which intersects the region below the $s\text{-circ}(B_i)$'s, then the motion must be net-clockwise optimal (an example can be found in Figure~\ref{fig:zone1subcase}, with the $B_i$'s pushed closer together). In this case, $A_1$ is in Zone IV of the clockwise zones, which we handle below. Otherwise, the motion for Zone I is the same as in Subcase 1, and the same proof applies.

\subsubsection*{$A_1$ is in zone IV}
Here we divide the motion into two different cases, depending on whether we are in Figure \ref{case2a-intersect-upper}, or \ref{case2a-intersect-lower1} and Figure \ref{case2a-intersect-lower2}. To be precise, denote $\mathcal{U}$ to be the region of $s\text{-corr}(B_0, B_1)$ that is above the $s\text{-circ}(B_i)$'s. We divide into two cases, depending on whether $A_0 \in \mathcal{U}$ or not.

\paragraph{Zone IV, subcase 1. $A_0 \in \mathcal{U}$}
In this case, the motions are exactly the same as those for Zone IV of the non-intersecting case.

\paragraph{Zone IV, subcase 2. $A_0 \not\in \mathcal{U}$}
This case is shown in Figures \ref{case2a-intersect-lower1} and \ref{case2a-intersect-lower2}. First, if $A_1$ is right of the upper tangent between $A_0$ and $s\text{-circ}(B_1)$ and left of the upper tangent between $A_0$ and $s\text{-circ}(B_0)$, then Lemma~\ref{lem:optimal-clockwise} shows that the optimal motion must be clockwise. In this case, the optimal clockwise motion is:
\begin{enumerate}
\item Move $\mathbb B$ from $B_0$ to $B_1$ rotating over the top of $s\text{-circ}(A_0)$. 
\item Move $\mathbb A$ in a straight line from $A_0$ to $A_1$. 
\end{enumerate}

\begin{proof}
The optimality of this motion can be see by reflecting the configuration vertically. Since $A_0$ dominates $A_1$, Lemma~\ref{lem:support} shows that $h_\mathbb{AB}(\theta)=s$ in the angles of rotation. For all other angles, the two support points are either $A_0$ or $A_1$ and $B_0$ or $B_1$.
\end{proof}

Now we assume that $A_1$ is outside of the region handled above. Let $T_{0}$ and $T_{1}$ be the lower tangent points of $B_0$
and $B_1$ to $s\text{-circ}(t)$ respectively. Let $V_0$ be the upper tangent point between $t$ and $s\text{-circ}(B_0)$. In this case the optimal motion is:
\begin{enumerate}
\item Move $\mathbb A$ on a shortest path from $A_0$ to $t$ while avoiding $s\text{-circ}(B_0)$. If $V_0 \not\in s\text{-circ}(B_1)$, this is simply a straight line and we define $[\beta_{0},\beta_{1}]=\emptyset$. Otherwise, this involves moving $\mathbb A$ to $V_0$, and rotating $\mathbb A$ in a range of angles $[\gamma_{0},\gamma_{1}]$ from $V_0$ to $t$. 
\item Move $\mathbb B$ from $B_0$ to $B_1$ avoiding $s\text{-circ}(t)$. This involves moving $\mathbb B$ to $T_{0}$, rotating $\mathbb B$
counter-clockwise about $A_{\text{int}}$ to $T_{1}$ in a range of angles $[\alpha_{0},\alpha_{1}]$,
and then moving $\mathbb B$ from $T_{1}$ to $B_1$.
\item Move $\mathbb A$ on a shortest path from $t$ to $A_1$ while avoiding $s\text{-circ}(B_1)$. This involves rotating $\mathbb A$ in a range of angles $[\gamma_2, \gamma_3]$ around $s\text{-circ}(B_1)$.
\end{enumerate}

\begin{proof}
The optimality of this motion is given by Lemma~\ref{lem:support}, with $t$ as the dominating point with respect to $s\text{-corr}(B_0,B_1)$. 
Excluding the clockwise optimal region described above is essential here, as it forces $A_1$ to be outside of the wedge formed by the upper tangents from $A_0$ to $s\text{-circ}(B_0)$ and $s\text{-circ}(B_1)$ when $A_0$ is below both of the $B$ circles. This ensures that the path taken by $\mathbb A$ is convex.
\end{proof}

\subsection{Case 3}
\label{sec:the-rest-2b}

As Case 3 is highly constrained, most of the motions for this case are particularly simple. Figures \ref{case3a}, \ref{case3b-intersect-upper}, and~\ref{case3b-intersect-lower} exhibit possible configurations of Case 3. As before, we begin by defining the zones non-constructively, and then move on to more constructive descriptions.

Let $p_0$ and $p_1$ be the upper tangent points from $A_0$ to $s\text{-circ}(B_0)$ and $s\text{-circ}(B_1)$ respectively. The zones are defined by the following properties:

\begin{itemize}
\itemsep0em
\item [Zone I:] The set of points $q \in s\text{-corr}(B_0, B_1)$ that dominate $A_0$.
\item [Zone II:] The set of points $q \in s\text{-corr}(B_0, B_1)$ $A_0$ dominates.
\item [Zone III:] The set of points $q \in s\text{-corr}(B_0, B_1)$ where the tangent from $q$ to $s\text{-circ}(B_0)$ intersects $\overline{A_0p_1}$
\item [Zone IV:] The set of points $q \in s\text{-corr}(B_0, B_1)$ where the tangent from $q$ to $s\text{-circ}(B_1)$ intersects $\overline{A_0p_0}$.
\end{itemize}

We do not handle situations which reduce to Case 2. For example, if $A_1\in s\text{-corr}(B_0,B_1)$, is left of the tangent through $\overline{A_0p}$, and is above $s\text{-circ}(B_0)$, then we would be in Case 2. Similarly, if $A_1\in s\text{-corr}(B_0,B_1)$, is right of $\overline{A_0p_1}$, and above $s\text{-circ}(B_1)$, then we would also be in Case 2. 

Although Zone IV above is handled in Case 2, we keep it for symmetry.  Zones I-IV of Figures~\ref{case3a} and~\ref{case3b-intersect-upper} are defined
by the following curves: 

\begin{enumerate}
\item The two upper tangents from $A_0$ to $\text{circ}_s(B_0)$ and $\text{circ}_s(B_1)$ (through tangent points $p_i$). These tangents separate zone I from the rest of the zones. The tangent from $A_0$ to $p_1$ forms the left boundary of zone II if $A_0$ is below the tangent from the bottom of $\text{circ}_s(B_0)$ to the top of $\text{circ}_s(B_1)$. The tangent from $A_0$ to $p_0$ forms part of the right boundary of zone II.
\item The two horizontal tangents from $\text{circ}_s(B_0)$.
\item The lower tangent from $A_0$ to $\text{circ}_s(B_0)$ and $\text{circ}_s(B_1)$ (through tangent points $q_i$). The tangent from $A_0$ to $q_1$ (resp. $q_0$) form part of the right (resp. left) boundary for zone III (resp. zone IV). The tangent from $A_0$ to $q_0$ (resp $q_1$) forms the left (resp. right) boundary of zone II if $A_0$ is above the tangent from below $\text{circ}_s(B_0)$ to above $\text{circ}_s(B_1)$ (resp. above $\text{circ}_s(B_0)$ to below $\text{circ}_s(B_1)$).
\item The arc of $\text{circ}_s(B_0)$ (resp. $\text{circ}_s(B_1)$) from $p_0$ to $t_0$ (resp. $p_1$ to $t_1$). If the tangent from $A_0$ to $p_0$ (resp. to $p_1$) does not intersect $\text{circ}_s(B_1)$ (resp. $\text{circ}_s(B_0)$), then $t_0$ is $q_0$ (resp. $t_1$ is $q_1$). Otherwise, $t_0$ (resp. $t_1$) is the intersection point.
\item The arc of $\text{circ}_s(B_0)$ (resp. $\text{circ}_s(B_1)$) from $t_0$ to $q_0$ (resp. $t_1$ to $q_1$). These arcs forms part of the left and right boundaries of zone II.
\end{enumerate}

We now specify, for each zone, the location of $A_{\text{int}}$, and define $T_{0}$ and $T_{1}$ to be the lower tangent points of $B_0$
and $B_1$ to $\text{circ}_s(A_{\text{int}})$ respectively.

\begin{itemize}
\itemsep0em
\item [Zone I:] $A_{\text{int}}$ is the point $A_1$.

\item [Zone II:] $A_{\text{int}}$ is the point $A_0$.

\item [Zone III:] $A_{\text{int}}$ is the intersection point of the tangent from $A_1$ to the
$\text{circ}_s(B_0)$ and the tangent from $A_0$ to
$\text{circ}_s(B_1)$. 

\item [Zone IV:] $A_{\text{int}}$ is the intersection point of the tangent from $A_0$ to the
$\text{circ}_s(B_0)$ and the tangent from $A_1$ to
$\text{circ}_s(B_1)$. 
\end{itemize}

Our generic three-stage motion then becomes:
\begin{enumerate}
\item Move $\mathbb A$ on a straight line from $A_0$ to $A_{\text{int}}$
\item Move $\mathbb B$ from $B_0$ to $B_1$ avoiding $\text{circ}_s(A_{\text{int}})$. This involves moving $\mathbb B$ to $T_{0}$, rotating
$\mathbb B$ counter-clockwise about $A_0$ to $T_{1}$ in a range of angles
$[\beta_{0},\beta_{1}]$, and then moving $\mathbb B$ from $T_{1}$ to
$B_1$.
\item Move $\mathbb A$ on a straight line motion from $A_{\text{int}}$ to $A_1$.
\end{enumerate}
Note that in zone IV of Figure \ref{case3a}, all optimal counter-clockwise motions are of exactly the same from as zone III of Case 2. 



\begin{figure}[ht]

\begin{center}
\includegraphics[width=\textwidth]{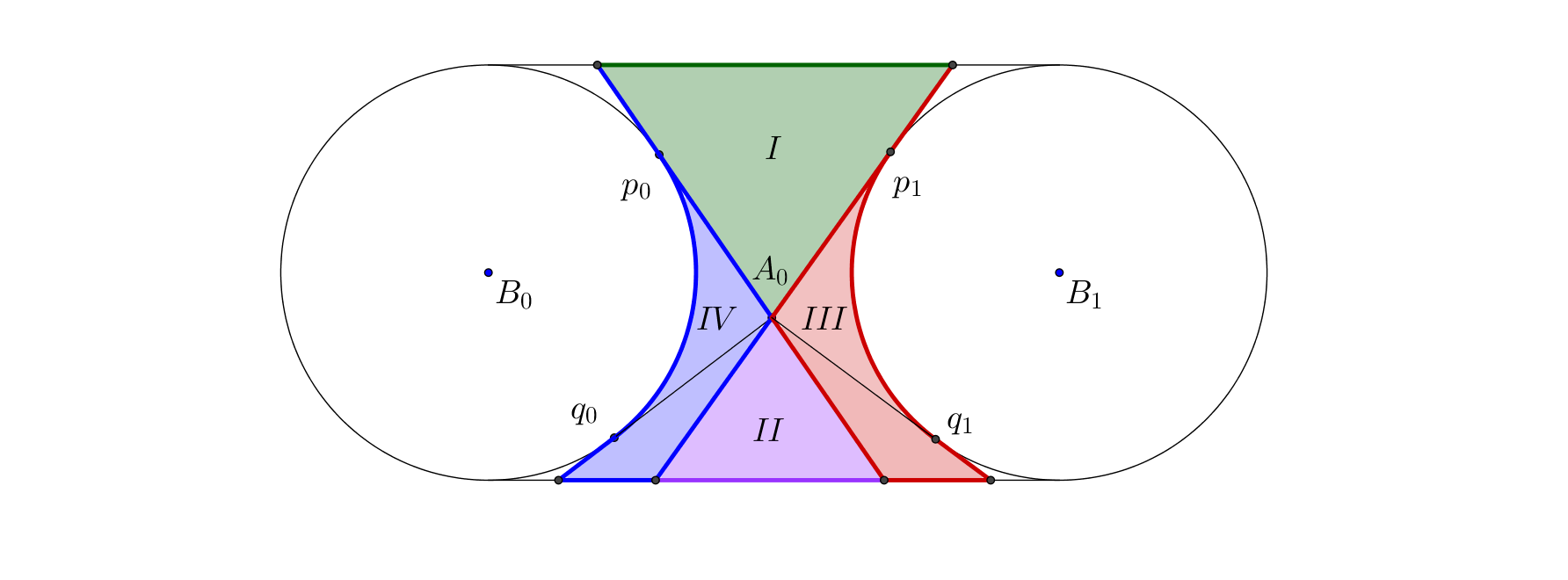}
\par\end{center}

\caption{Case 3, when $s\text{-circ}(B_0)$ and $s\text{-circ}(B_1)$ do not intersect. \label{case3a}}
\end{figure}

\begin{figure}[ht]

  \begin{center}
  \includegraphics[width=0.9\textwidth]{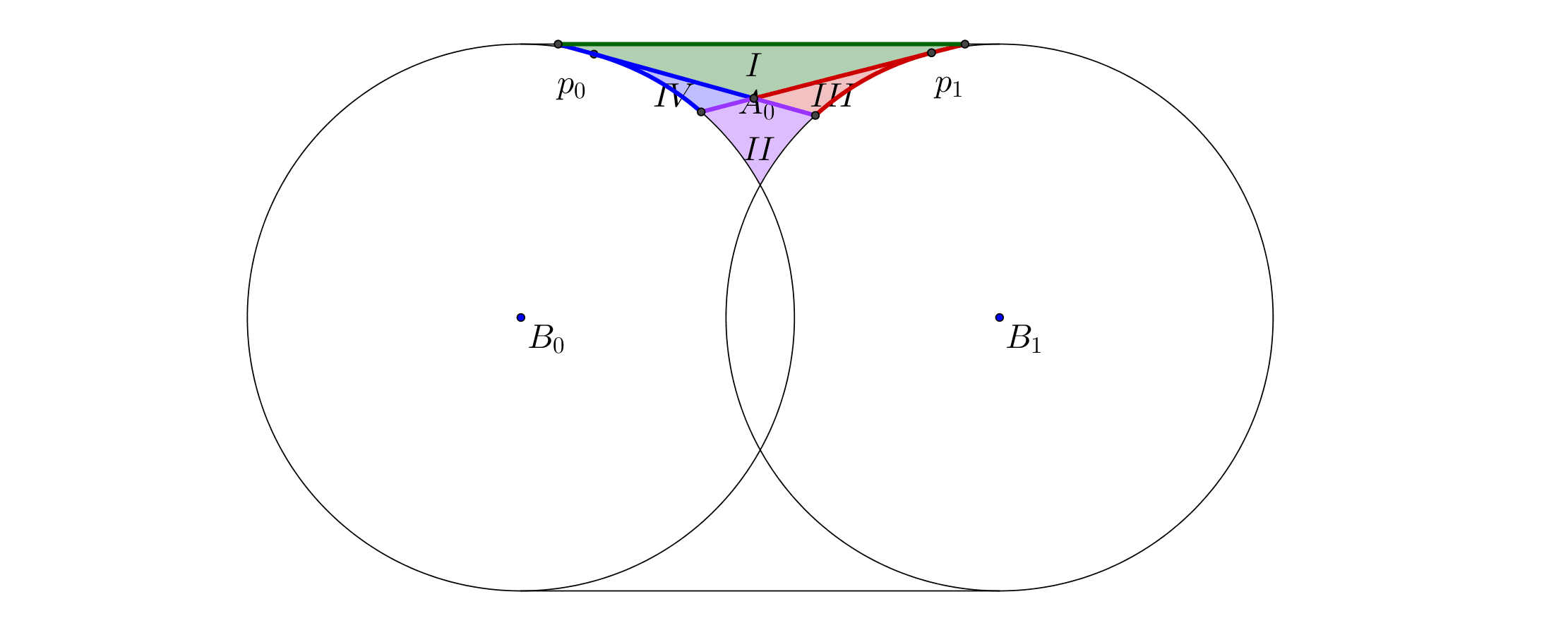}
  \end{center}
  \caption{Case 3, when $s\text{-circ}(B_0)$ and $s\text{-circ}(B_1)$ intersect and both $A_0$ and $A_1$ are above the $s\text{-circ}(B_i)$'s.}
  \label{case3b-intersect-upper}

\end{figure}

\begin{figure}

  \begin{center}
  \includegraphics[width=0.9\textwidth]{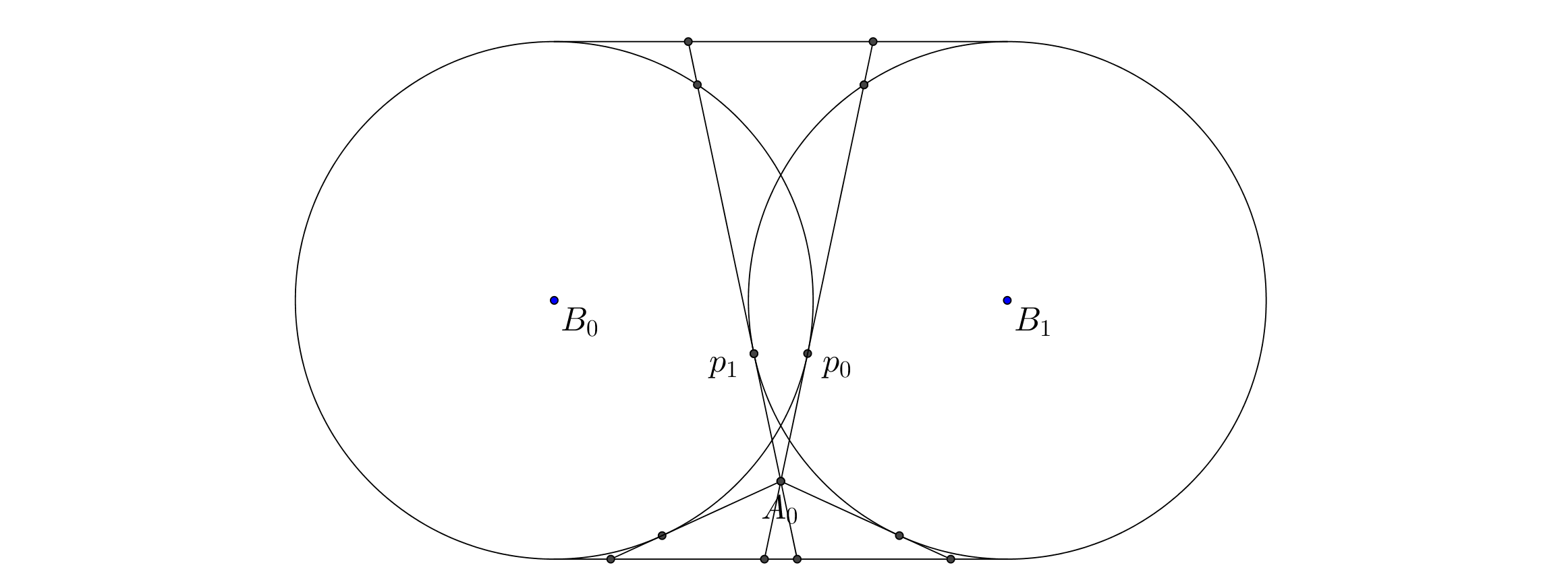}
  \end{center}
  \caption{Case 3, when $s\text{-circ}(B_0)$ and $s\text{-circ}(B_1)$ intersect and both $A_0$ and $A_1$ are below the $s\text{-circ}(B_i)$'s.}
  \label{case3b-intersect-lower}

\end{figure}

\paragraph{Case 3, subcase 1: $s\text{-circ}(B_0)$ and $s\text{-circ}(B_1)$ do not intersect.}

\begin{proof}
In all cases (see Figure~\ref{case3a}), applications of Lemma~\ref{lem:support} will suffice. The proof of Zones III and IV are exactly the same as the proof for Case 2, Zone III. For Zones I and II, note that for all cases that Case 2 do not cover, $A_0$ must be reachable from $A_1$ by a straight-line. Hence there are no special cases and a single application of Lemma~\ref{lem:support} with either $A_0$ as the pivot (for Zone II) or $A_1$ as the pivot (for Zone I) suffices.
\end{proof}

\subsubsection{Subcase 2: $s\text{-circ}(B_0)$ and $s\text{-circ}(B_1)$ intersects}

When $s\text{-circ}(B_0)$ and $s\text{-circ}(B_1)$ intersect, observe that the constraints force either $A_0$ and $A_1$ to be both above the $B$ circles, or both below. This is because if $A_0$ was below the $s\text{-circ}(B_i)$'s and $A_1$ above, then we must be in Case 2 (after possibly swapping the initial and final positions).

\paragraph{$A_0$ and $A_1$ both above}
When $A_0$ and $A_1$ are both above the $B$ circles, we get Figure \ref{case3b-intersect-upper}. In this case, the same zones and proofs as the non-intersecting case apply.

\paragraph{$A_0$ and $A_1$ both below}
When $A_0$ and $A_1$ are both below the $B$ circles, we get Figure \ref{case3b-intersect-lower}. In this case, Lemma~\ref{lem:optimal-clockwise} shows that the motion must be net-clockwise. The clockwise zones have $A_0$ and $A_1$ in the ``both above'' case, which is handled above.

\section{Angle monotone motions}
\begin{figure}[h]
\begin{center}
  \includegraphics[width=0.9\textwidth]{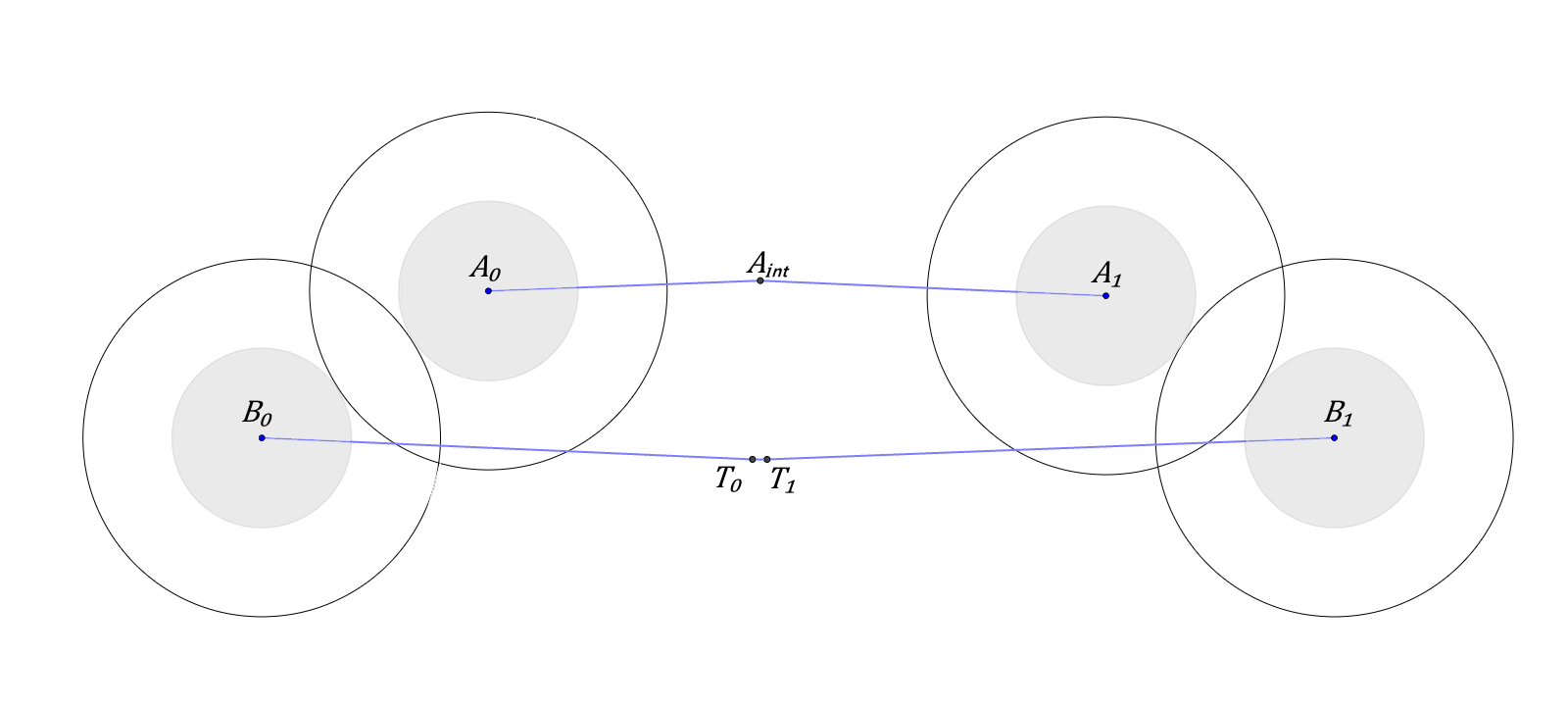}
  \end{center}
  \caption{\label{fig:non-monotone}}
\end{figure}

Up until now, we've stated all of our motions as \textbf{decoupled} motions where only one of $\mathbb A$ or $\mathbb B$ is moving at a time. However, we can produce angle monotone motions by simply \textbf{coupling} the optimal motions $m$ given in the previous sections. 


To be precise, let $m$ be a motion such that $m(t_i)$ and $m(t_j)$ has the same angle. Then by coupling the motion $m$, we mean that we replace the submotion $m([t_i, t_j])$ with a straight-line path between $m(t_i)$ and $m(t_j)$. This process produces coupled angle monotone motions from decoupled ones. Most of the motions described in the previous section Section are angle monotone. The only situation in which non-angle monotonicity occurs in our decoupled motions is when $A_1$ is in Zone III of Case 3 above (see Figure~\ref{fig:non-monotone}). In all other cases, we have angle monotonicity for the decoupled motion as well, although the discs are possibly not be in contact for a single connected interval of time. 

One can also couple the motions to achieve both angle monotonicity and the property that the two discs are in contact for a single connected interval. This is obtained by following the trace of optimal motions outlined in the previous sections while keeping $\mathbb A$ and $\mathbb B$ as close together as possible. The proof, although not difficult, is lengthy as it requires examining the motions of each case in the previous Section. 

\section{Conclusions}
\label{sec:con}

Using the Cauchy surface area formula, we have presented and proved shortest collision-avoiding paths for two disc robots in a planar obstacle free environment. The path lengths are neatly characterized by a simple integral, and had the property that they could be \textbf{decoupled} so that only one disc is moving at any given time, or \textbf{coupled} so that the angle formed by a ray joining the two discs changes monotonically throughout the motion. The coupled motion has the additional property that discs are in contact for a connected interval of time, that is, once the discs move out of contact, they are never in contact again.

As far as we know, our tools are limited to the case when the robots are discs in 2D. Indeed, when the robots are spheres in 3D, even if the initial and final positions of the robot reside in a common plane, we have not been able to show that the shortest path stays within this plane (except in special cases). The 3D extension of the problem as well as the 2D problem with obstacles remain subjects for future exploration.

\bibliographystyle{plain}
\bibliography{motionplanning}

\begin{thebibliography}{10}

\bibitem{abellanas}
Manuel Abellanas, Sergey Bereg, Ferran Hurtado, Alfredo~García Olaverri, David
  Rappaport, and Javier Tejel.
\newblock Moving coins.
\newblock {\em Computational Geometry}, 34(1):35 -- 48, 2006.
\newblock Special Issue on the Japan Conference on Discrete and Computational
  Geometry 2004Japan Conference on Discrete and Computational Geometry 2004.

\bibitem{adler0}
Aviv Adler, Mark de~Berg, Dan Halperin, and Kiril Solovey.
\newblock {\em Efficient Multi-robot Motion Planning for Unlabeled Discs in
  Simple Polygons}, pages 1--17.
\newblock Springer International Publishing, Cham, 2015.

\bibitem{bereg}
Sergey Bereg, Adrian Dumitrescu, and J{\'a}nos Pach.
\newblock {\em Sliding Disks in the Plane}, pages 37--47.
\newblock Springer Berlin Heidelberg, Berlin, Heidelberg, 2005.

\bibitem{ChenIerardi}
Yui-Bin Chen and Doug Ierardi.
\newblock Optimal motion planning for a rod in the plane subject to velocity
  constraints.
\newblock In {\em Proceedings of the Ninth Annual Symposium on Computational
  Geometry}, SCG '93, pages 143--152, New York, NY, USA, 1993. ACM.

\bibitem{ladder}
Zhengyuan Chen, Ichiro Suzuki, and Masafumi Yamashita.
\newblock Time-optimal motion of two omnidirectional robots carrying a ladder
  under a velocity constraint.
\newblock {\em {IEEE} T. Robotics and Automation}, 13(5):721--729, 1997.

\bibitem{dumitrescu}
Adrian Dumitrescu and Minghui Jiang.
\newblock {\em On Reconfiguration of Disks in the Plane and Related Problems},
  pages 254--265.
\newblock Springer Berlin Heidelberg, Berlin, Heidelberg, 2009.

\bibitem{egg}
H.~G. Eggleston.
\newblock {\em Convexity}.
\newblock Cambridge Tracts in Mathematics and Mathematical Physics, No. 47.
  Cambridge University Press, New York, 1958.

\bibitem{gromov}
Misha Gromov.
\newblock {\em Metric structures for {R}iemannian and non-{R}iemannian spaces}.
\newblock Modern Birkh\"auser Classics. English edition, 2007.

\bibitem{gurevich}
A.~B. Gurevich.
\newblock The "most economical" displacement of a segment (in russian).
\newblock {\em Differentsial'nye Uravneniya}, 11(12):2134--2143, 1975.

\bibitem{hearn}
Robert~A. Hearn and Erik~D. Demaine.
\newblock Pspace-completeness of sliding-block puzzles and other problems
  through the nondeterministic constraint logic model of computation.
\newblock {\em Theoretical Computer Science}, 343(1):72 -- 96, 2005.

\bibitem{hopcroft1}
J~E Hopcroft and G~T Wolfong.
\newblock Reducing multiple object motion planning to graph searching.
\newblock {\em SIAM J. Comput.}, 15(3):768--785, August 1986.

\bibitem{hopcroft0}
J.E. Hopcroft, J.T. Schwartz, and M.~Sharir.
\newblock On the complexity of motion planning for multiple independent
  objects; pspace- hardness of the "warehouseman's problem".
\newblock {\em The International Journal of Robotics Research}, 3(4):76--88,
  1984.

\bibitem{icking}
Christian Icking, G{\"{u}}nter Rote, Emo Welzl, and Chee{-}Keng Yap.
\newblock Shortest paths for line segments.
\newblock {\em Algorithmica}, 10(2-4):182--200, 1993.

\bibitem{planning-text}
Steven~M. LaValle.
\newblock {\em Planning algorithms}.
\newblock Cambridge University Press, Cambridge, 2006.

\bibitem{gildardo}
Gildardo S{\'{a}}nchez{-}Ante and Jean{-}Claude Latombe.
\newblock Using a {PRM} planner to compare centralized and decoupled planning
  for multi-robot systems.
\newblock In {\em Proceedings of the 2002 {IEEE} International Conference on
  Robotics and Automation, {ICRA} 2002, May 11-15, 2002, Washington, DC,
  {USA}}, pages 2112--2119, 2002.

\bibitem{sharir0}
Jacob~T Schwartz and Micha Sharir.
\newblock On the piano movers' problem: {III}. coordinating the motion of
  several independent bodies: the special case of circular bodies moving amidst
  polygonal barriers.
\newblock {\em The International Journal of Robotics Research}, 2(3):46--75,
  1983.

\bibitem{sharir2}
Micha Sharir and Shmuel Sifrony.
\newblock Coordinated motion planning for two independent robots.
\newblock {\em Ann. Math. Artif. Intell.}, 3(1):107--130, 1991.

\bibitem{solovey2}
Kiril Solovey and Dan Halperin.
\newblock On the hardness of unlabeled multi-robot motion planning.
\newblock In {\em Robotics: Science and Systems XI, Sapienza University of
  Rome, Rome, Italy, July 13-17, 2015}, 2015.

\bibitem{solovey1}
Kiril Solovey, Jingjin Yu, Or~Zamir, and Dan Halperin.
\newblock Motion planning for unlabeled discs with optimality guarantees.
\newblock {\em CoRR}, abs/1504.05218, 2015.

\bibitem{spirakis}
Paul~G. Spirakis and Chee{-}Keng Yap.
\newblock Strong {NP}-hardness of moving many discs.
\newblock {\em Inf. Process. Lett.}, 19(1):55--59, 1984.

\bibitem{standley}
Trevor Standley.
\newblock Finding optimal solutions to cooperative pathfinding problems.
\newblock In {\em Proceedings of the Twenty-Fourth AAAI Conference on
  Artificial Intelligence}, AAAI'10, pages 173--178. AAAI Press, 2010.

\bibitem{turpin}
Matthew Turpin, Nathan Michael, and Vijay Kumar.
\newblock {\em Trajectory Planning and Assignment in Multirobot Systems}, pages
  175--190.
\newblock Springer Berlin Heidelberg, Berlin, Heidelberg, 2013.

\bibitem{ulam}
Stanislaw~Marcin Ulam.
\newblock {\em A Collection of Mathematical Problems: Problems in Modern
  Mathematics}.
\newblock Science Eds., 1964.

\bibitem{verriest}
Erik~I Verriest.
\newblock On {U}lam's problem of path planning, and ``{H}ow to move heavy
  furniture".
\newblock {\em IFAC Proceedings Volumes}, 44(1):14562--14566, 2011.

\bibitem{wagner}
Glenn Wagner and Howie Choset.
\newblock M*: {A} complete multirobot path planning algorithm with performance
  bounds.
\newblock In {\em 2011 {IEEE/RSJ} International Conference on Intelligent
  Robots and Systems, {IROS} 2011, San Francisco, CA, USA, September 25-30,
  2011}, pages 3260--3267, 2011.

\bibitem{yap}
Chee Yap.
\newblock {\em Coordinating the motion of several discs}.
\newblock Robotics Report. Department of Computer Science, New York University,
  1984.

\end{thebibliography}

\end{document}